%% file: main.tex
\documentclass[journal]{IEEEtran}
\usepackage{amsthm}
\usepackage{amsmath,amsfonts}
\usepackage{algorithmic}
\usepackage{algorithm}
\usepackage{array}
\usepackage[caption=false,font=normalsize,labelfont=sf,textfont=sf]{subfig}
\usepackage{textcomp}
\usepackage{amssymb}
\usepackage{stfloats}
\usepackage{url}
\usepackage{verbatim}
\usepackage{graphicx}
\usepackage{cite}
\usepackage{acronym}
\usepackage{orcidlink}
\usepackage{lipsum}
\usepackage{cancel}

\usepackage{color,soul}
\PassOptionsToPackage{bookmarks=false}{hyperref}
\hypersetup{hidelinks}
\input{acronyms.tex}
\usepackage[normalem]{ulem}
\usepackage[user]{zref}
\newcounter{revc}
\makeatletter \zref@newprop{revcontent}{} \zref@addprop{main}{revcontent}
\zref@newprop{revsec}{} \zref@addprop{main}{revsec}
\zref@newprop{revpage}{} \zref@addprop{main}{revpage}

\newcommand{\revi}[2]{
\zref@setcurrent{revsec}{\thesection}%
\zref@setcurrent{revpage}{\thepage}%
\zref@setcurrent{revcontent}{#2}%
\refstepcounter{revc}%
\label{#1}%
\zlabel{#1}%
#2%
}

\newcommand{\revinu}[2]{%
\zref@setcurrent{revsec}{\thesection}%
\zref@setcurrent{revcontent}{#2}%
\refstepcounter{revc}%
\zlabel{#1}%
\label{#1}
#2 }

\newcommand{\revr}[2]{%
\zref@setcurrent{revsec}{\thesection}%
\zref@setcurrent{revcontent}{#2}%
\refstepcounter{revc}%
\zlabel{#1}%
\label{#1} \sot{#2}} \makeatother

\usepackage[framemethod=tikz]{mdframed}
\definecolor{mycolor}{rgb}{0.122, 0.435, 0.698}
\newmdenv[innerlinewidth=0.5pt, roundcorner=4pt,linecolor=mycolor,innerleftmargin=6pt,innerrightmargin=6pt,innertopmargin=6pt,innerbottommargin=6pt]{mybox}

\begin{document}

\title{Extremely Large Beyond-Diagonal RIS: Low-Rank Modal Optimization for Near-Field Communications}
\author{Giovanni Iacovelli$^{\orcidlink{0000-0002-3551-4584}}$,~\IEEEmembership{Member,~IEEE}, Chandan~Kumar~Sheemar$^{\orcidlink{0000-0003-1676-5983}}$,~\IEEEmembership{Member,~IEEE}, Eva Lagunas$^{\orcidlink{0000-0002-9936-7245}}$,~\IEEEmembership{Senior Member,~IEEE}, and Symeon Chatzinotas$^{\orcidlink{0000-0001-5122-0001}}$,~\IEEEmembership{Fellow,~IEEE}
\thanks{Copyright (c) 2025 IEEE. Personal use of this material is permitted. However, permission to use this material for any other purposes must be obtained from the IEEE by sending a request to pubs-permissions@ieee.org. G. Iacovelli, C.K. Sheemar, E. Lagunas, and S. Chatzinotas are with the Signal Processing and Communications (SIGCOM) Research Group at Interdisciplinary Centre for Security, Reliability and Trust (SnT), University of Luxembourg, 1855 Luxembourg City, Luxembourg (emails: giovanni.iacovelli@uni.lu, chandankumar.sheemar@uni.lu, eva.lagunas@uni.lu, symeon.chatzinotas@uni.lu).}}

\newtheorem{assumption}{Assumption}
\newtheorem{remark}{Remark}
\newtheorem{theorem}{Theorem}
\newtheorem{corollary}{Corollary}
\newtheorem{lemma}{Lemma}
\newtheorem{proposition}{Proposition}

\maketitle
\begin{abstract}
Beyond-diagonal reconfigurable intelligent surfaces (BD-RIS) achieve their best performance when fully connected, at the price of an optimization and hardware burden that grows quadratically, and per iteration cubically, with the number of elements. Extremely large surfaces make this burden prohibitive, while their sheer aperture places both the base station and the users in the radiative near field, where far-field design tools break down. This paper introduces the extremely large BD-RIS (XL-BD-RIS) concept and shows that near-field geometry is precisely what makes fully connected performance affordable at scale. Modeling the cascade with the free-space Green function, we prove that the aperture fields live in a low-dimensional subspace spanned by the spherical-wave responses of the terminal positions, and we design a compact unitary modal matrix on this subspace, built from localization information alone, that provably attains the fully connected optimum with a number of reconfigurable entries set by the geometry and independent of the panel size. A weighted-MMSE Riemannian algorithm optimizes the beamformers and the modal matrix with monotone convergence at panel-size-independent cost. Numerical results show that a $24\times24$-element panel reaches the fully connected optimum with about two hundred entries instead of three hundred thousand. A mismatched DFT beamspace pays a sixty-fold entry penalty rooted in the beam spread of spherical wavefronts, while the classical block-wise architecture delivers strictly lower rates at any matched entry budget.
\end{abstract}

\begin{IEEEkeywords}
Beyond-diagonal RIS, extremely large surfaces, near field, modal decomposition, low-rank optimization, sum rate.
\end{IEEEkeywords}
\section{Introduction}
\allowdisplaybreaks
 \IEEEPARstart{B}{eyond-diagonal} reconfigurable intelligent surfaces (BD-RIS) have emerged as an advanced RIS architecture that overcomes the limitations imposed by element-wise independent control \cite{shen2022modeling,li2022beyond}. By interconnecting the surface elements through a reconfigurable impedance network, BD-RIS enables non-diagonal scattering responses that, in the lossless reciprocal case, are unitary and symmetric, thereby exploiting a substantially larger scattering-parameter space and enabling more general wave transformations than conventional diagonal RIS architectures. The BD-RIS framework encompasses single-, group-, and fully-connected architectures with different complexity-performance trade-offs \cite{li2022beyond}, multi-sector configurations for full-space coverage \cite{li2023multisector}, closed-form and globally optimal designs for single-user links \cite{nerini2024closedform}, and graph-theoretic formulations identifying minimal circuit topologies capable of retaining optimal performance \cite{nerini2024graph}. Comprehensive tutorial and survey treatments are provided in \cite{li2025tutorialbeyonddiagonal,khan2025survey}. Overall, fully-connected BD-RIS offers the greatest wave-manipulation flexibility and typically achieves the highest performance, whereas group-connected architectures provide a practical trade-off between performance and circuit complexity.

In parallel, the push toward higher carriers and larger apertures has moved practical deployments into the radiative near field. For an extremely large (XL) surface of aperture $D$, operated at the millimeter-wave and sub-THz carriers where such apertures are envisioned, the Rayleigh distance $2D^{2}/\lambda$ reaches tens or hundreds of meters (about $3$~m already for a $13$~cm panel at $28$~GHz), so that both the base station (BS) and the users naturally reside where wavefronts are spherical rather than planar \cite{dardari2020communicating,cuidai2022,zhang2022beamfocusing}. This regime is not a nuisance to be corrected but a resource. Spherical wavefronts carry range information, enable beam focusing rather than mere beam steering, and enlarge the spatial degrees of freedom beyond the far-field angular picture \cite{miller2000,pizzo2020spatially,pizzo2022fourier,zhang2022beamfocusing}. At the same time, the near field invalidates the modeling shortcuts on which much of the RIS literature rests. Planar-wavefront steering vectors, angle-only channel parameterizations, and Fourier (DFT) beamspace representations all degrade as the geometry deepens \cite{cuidai2022}.

These two trends collide in the optimization. The natural variable of a fully-connected BD-RIS is an $M\times M$ unitary (symmetric) scattering matrix, and state-of-the-art designs optimize it in the element domain, with per-iteration costs growing as $O(M^{3})$ and a quadratic number $O(M^{2})$ of reconfigurable circuit entries \cite{li2022beyond,li2025tutorialbeyonddiagonal,nerini2024graph}. For the surface sizes that make the near field relevant in the first place, hundreds to thousands of elements, both scalings are prohibitive. The algorithms do not converge in reasonable time, and the hardware interface does not scale. The group-connected family reduces both costs, but abandons any optimality guarantee. Its feasible set is a measure-zero subset of the unitary group, and how much of the diagonal-to-fully-connected gap a given group size recovers is an empirical property of the channel. Conversely, dimensionality reduction through a fixed generic basis, most prominently the DFT beamspace inherited from far-field massive MIMO, re-introduces the very far-field assumption the regime violates. As we quantify in this paper, a spherical wavefront atom occupies on the order of $(d_{\mathrm F}/2r)^{2}$ beams at range $r$, so a beamspace-restricted design pays an entry-count penalty that grows quadratically with near-field depth. What is missing is a dimensionality reduction that is matched to the near-field geometry and comes with a provable equivalence to the fully-connected optimum. This calls for an architecture whose reconfigurable dimension is set by the propagation environment, namely the number of coupled spherical-wave modes, rather than by the panel size, together with an optimization scheme whose per-iteration cost is independent of $M$. Providing exactly this is the purpose of the present paper. Our design philosophy is related in spirit to the channel-operator diagonalization of \cite{iacovelli2025hmiso} for holographic MISO links, here generalized from a single active aperture to the two-leg reflective cascade of a BD-RIS. In the present setting the design variable is the surface response rather than the transmit current, and unitarity of the compressed response must be preserved exactly.

The main contributions are as follows.
\begin{itemize}
\item \textbf{\emph{Geometric modal architecture.}} Modeling both legs of the cascade with a single free-space Green function, we show that all fields on the aperture live in the span of the $N{+}K$ position-determined spherical-wave atoms, and we construct a modal basis from one thin SVD of the stacked atom matrix, completed by an arbitrary orthonormal slack of equal dimension. Performance depends on the basis only through its column space (Prop.~\ref{prop:subspace}), so the construction requires localization information alone.
\item \textbf{\emph{Exact $M$-free equivalence.}} Via a unitary dilation argument (Prop.~\ref{prop:halmos}), an $L\times L$ unitary modal matrix with $L=2\rho\le2(N{+}K)$, embedded in the surface through a lossless completion, is provably equivalent to an unconstrained $M\times M$ unitary surface: the fully-connected optimum is attained with a number of reconfigurable entries $L^{2}$ set by the propagation geometry and independent of the panel size. The dimension $\rho$ is the numerical rank of the stacked atom matrix, measured rather than assumed, and the rule adapts as the geometry changes. The equivalence itself relies only on the subspace structure of the channel and holds in any \ac{LoS} geometry. The near field is where it is both constructible, since the subspace follows from terminal positions through the spherical-wave atoms, and valuable, since in the far field the dimension collapses toward $K{+}1$ and a plane-wave beamspace already spans the subspace.
\item \textbf{\emph{Scalable optimization with guarantees.}} We develop a WMMSE-Riemannian alternation for the weighted sum-rate problem: closed-form beamforming with an exact radial power characterization (Lemma~\ref{lem:radial}), and a Riemannian update of the modal matrix on the unitary group with a low-rank line search whose per-iteration cost is $O(L^{3})$, independent of $M$, with monotone convergence of the objective (Prop.~\ref{prop:monotone}). The same alternating scheme specializes to the entire element-domain family, with exact coordinate descent for the diagonal surface, blockwise Riemannian updates for group-connected networks, and the fully-connected anchor, yielding a complete and mutually consistent complexity-performance ladder in which the modal architecture is the only provably anchor-attaining entry.
\item \textbf{\emph{Near-field analysis and validation.}} Extensive experiments in exact spherical-wave geometry quantify three effects. The first is the entry-count penalty of the mismatched DFT beamspace, with its closed-form origin in the $(d_{\mathrm F}/2r)^{2}$ beam spread of spherical atoms. The second is near-field range resolvability, i.e., two users at the same azimuth separated only in range, including the alignment of the rate transition with the depth-of-focus law $2r_0^{2}/d_{\mathrm F}$. The third is the dependence of the dimensioning rule and of all architectures on near-field depth, showing in particular that the advantage of beyond-diagonal coupling over diagonal phasing is itself an increasing function of depth. All optimality claims are certified numerically by cross-architecture exchanges and fully-connected checks.
\end{itemize}

Notation: Lowercase and uppercase boldface denote vectors and matrices. $(\cdot)^{\mathsf T}$, $(\cdot)^{*}$, and $(\cdot)^{\mathsf H}$ denote transpose, conjugate, and conjugate transpose. $\mathrm{tr}(\cdot)$, $\|\cdot\|$, and $\|\cdot\|_{\mathrm F}$ are the trace, the Euclidean norm, and the Frobenius norm. $\mathbf{I}_n$ is the $n\times n$ identity and $\mathcal{U}(n)$ the group of $n\times n$ unitary matrices. $\mathrm{blkdiag}(\cdot)$ is the block-diagonal operator, $\mathbb{E}[\cdot]$ denotes expectation, and $\jmath=\sqrt{-1}$. The span of the columns of $\mathbf{A}$ is $\mathrm{span}(\mathbf{A})$, and $\mathrm{rank}_\epsilon(\mathbf{A})$ denotes the numerical rank at relative threshold $\epsilon$.

The remainder of the paper is organized as follows. Sec.~\ref{sec:system_model} introduces the continuous-aperture near-field system model. Sec.~\ref{sec:modal} develops the modal representation of the cascaded channel and its lossless completion. Sec.~\ref{sec:basis} constructs the geometric modal basis and establishes the exactness result. Sec.~\ref{sec:sumrate} presents the WMMSE-Riemannian sum-rate maximization and its complexity. Sec.~\ref{sec:architectures} extends the scheme to the element-domain architecture family. Sec.~\ref{sec:numerical} reports the numerical study, and Sec.~\ref{sec:conclusion} concludes.

\section{System Model}
\label{sec:system_model}
We consider the narrowband downlink depicted in Fig.~\ref{fig:scenario}, in which a multi-antenna \ac{BS} serves $K$ single-antenna users with the help of an XL-BD-RIS. The \ac{BS} is equipped with $N\ge K$ antennas at positions $\{\mathbf{b}_k\}_{k=1}^{N}$ and transmits one data stream per user. The direct \ac{BS}-user links are assumed to be blocked, so that communication is established solely through the cascaded \ac{BS}-RIS-user path. This is a standard modeling choice that isolates the BD-RIS contribution and can be relaxed by superimposing a residual direct term.

The \ac{BS} sends the symbol vector $\mathbf{s}=[s_1,\dots,s_K]^\top$ with $\mathbb{E}[\mathbf{s}\mathbf{s}^\mathsf{H}]=\mathbf{I}_K$, applying a beamformer $\mathbf{w}_k\in\mathbb{C}^{N}$ to each stream under the total power budget $\sum_{k}\|\mathbf{w}_k\|^2\le P$.

\begin{figure}[t]
\centering
\includegraphics[width=1\columnwidth]{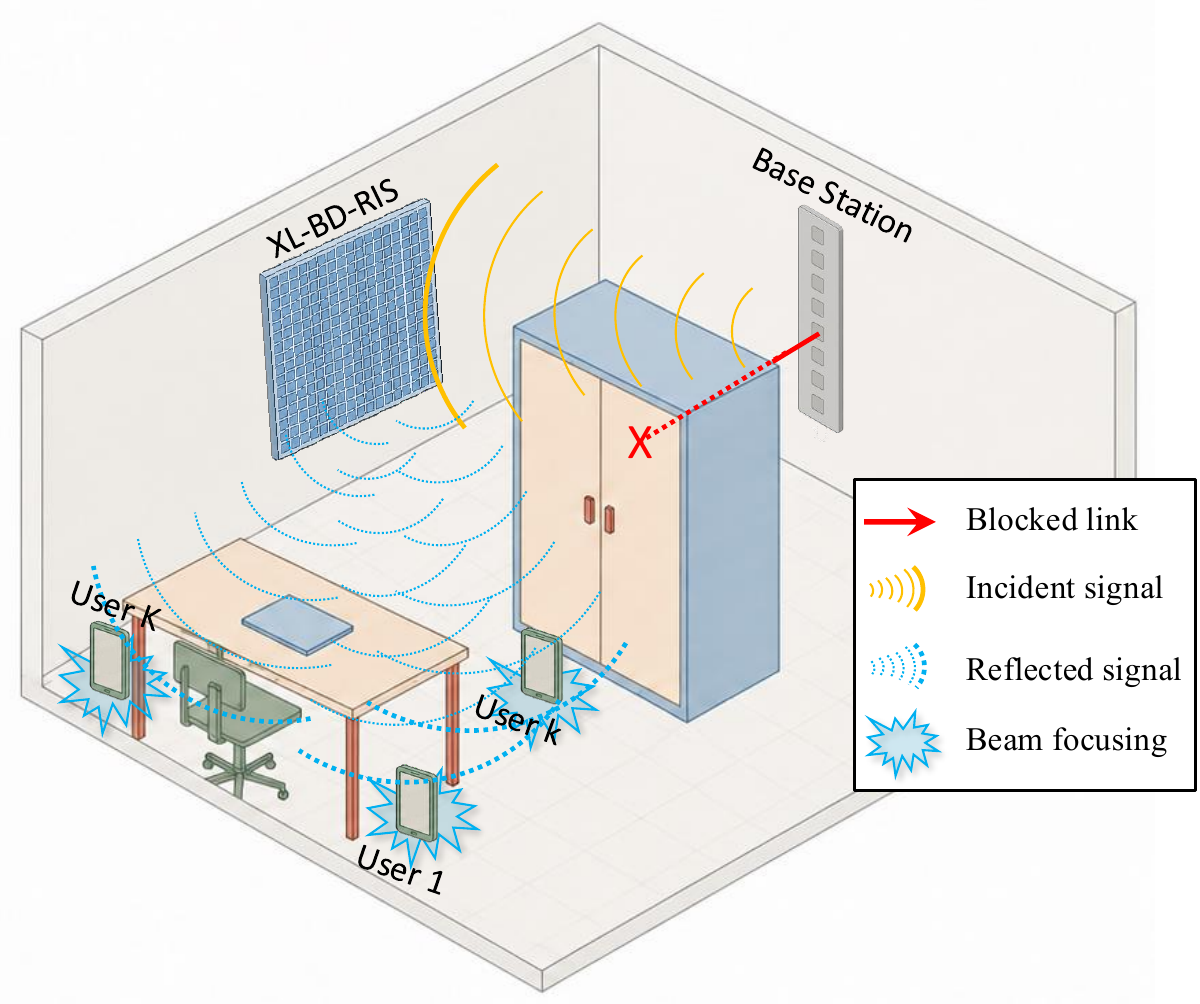}
\caption{Reference scenario. An XL-BD-RIS embedded in a wall serves $K$ ground users whose direct \ac{BS} links are blocked. Both legs of the cascade exhibit spherical wavefronts, and the surface focuses, rather than steers, the reflected field toward each user (drawing not to scale).}
\label{fig:scenario}
\end{figure}

\subsection{Continuous-Aperture Channel}
\label{sec:cont_channel}
To expose the spatial structure of the surface before discretization, we first describe the XL-BD-RIS as a continuous aperture $\mathcal{A}\subset\mathbb{R}^3$, and we let $\mathbf{r}\in\mathcal{A}$ denote a generic point on it. The continuous description is an analytical device: it lets us characterize the field over the surface independently of any particular element layout, while the physical surface will be recovered in Sec.~\ref{sec:discretization} by sampling $\mathcal{A}$ at $M$ elements on a $\lambda/2$ grid. The description is accurate whenever the aperture is electrically large, i.e., its extent greatly exceeds $\lambda$, so that the field varies smoothly across it.

Both legs of the cascade are described by one and the same propagation primitive: the free-space scalar Green function between two points $\mathbf{x},\mathbf{y}\in\mathbb{R}^3$,
\begin{equation}
\mathcal{G}(\mathbf{x},\mathbf{y})=\frac{e^{-j\kappa_0\|\mathbf{x}-\mathbf{y}\|}}{\|\mathbf{x}-\mathbf{y}\|},
\label{eq:green_function}
\end{equation}
with wavenumber $\kappa_0=2\pi/\lambda$, which retains both the spherical phase curvature and the amplitude decay of a point source. It is symmetric in its arguments, $\mathcal{G}(\mathbf{x},\mathbf{y})=\mathcal{G}(\mathbf{y},\mathbf{x})$, reflecting propagation reciprocity. We adopt a \ac{LoS}-dominant model on both legs, consistent with deployment practice: the surface is installed precisely so that it maintains visual links to the \ac{BS} and to the served area, while the direct \ac{BS}-user paths are blocked. Each leg is then a sampled Green function: the \ac{BS}-RIS response from antenna $k$ to an aperture point $\mathbf{r}$ is
\begin{equation}
g_k(\mathbf{r})=\sqrt{G_{\mathrm{BS}}}\,\mathcal{G}\!\left(\mathbf{r},\mathbf{b}_k\right),\qquad k=1,\dots,N,
\label{eq:bs_ris_scattered_channel}
\end{equation}
where $G_{\mathrm{BS}}$ collects the \ac{BS} antenna gain, and the RIS-user-$n$ response, with the user at $\mathbf{u}_n$ and $\mathbf{r}$ the aperture point acting as the source, is
\begin{equation}
f_n(\mathbf{r})=\mathcal{G}\!\left(\mathbf{u}_n,\mathbf{r}\right)=\mathcal{G}\!\left(\mathbf{r},\mathbf{u}_n\right),\qquad n=1,\dots,K .
\label{eq:ris_user_scattered_channel}
\end{equation}
The entire propagation environment is thus generated by the $N{+}K$ terminal positions $\{\mathbf{b}_k\}\cup\{\mathbf{u}_n\}$ through the single kernel \eqref{eq:green_function}. There are no random path gains, and every quantity derived below is a deterministic function of the deployment geometry. Two consequences are worth recording at the outset. First, across an electrically large aperture the factor $1/\|\mathbf{r}-\mathbf{b}_k\|$ imprints a smooth but non-negligible amplitude taper on each leg, in addition to the spherical phase: it is exactly this amplitude structure, which a phase-only surface cannot reshape, that will separate the reconfigurable architectures compared in this paper. Second, since every length in the model (element pitch, aperture, ranges) will be specified in units of $\lambda$, the geometry, the channel matrices, and hence all rates are invariant under a common rescaling of the wavelength: results stated at one carrier transfer verbatim to any other, only the physical size of the deployment changes.

Stacking the per-antenna responses into $\mathbf{g}(\mathbf{r})=[g_1(\mathbf{r}),\dots,g_N(\mathbf{r})]\in\mathbb{C}^{1\times N}$, the field impinging on the aperture under beamformer $\mathbf{w}_k$ is
\begin{equation}
e_k(\mathbf{r})=\mathbf{g}(\mathbf{r})\,\mathbf{w}_k .
\label{eq:incident_field}
\end{equation}
The XL-BD-RIS acts on the incident field through a linear scattering kernel $\theta(\mathbf{r},\mathbf{r}')$, producing the surface field
\begin{equation}
u_k(\mathbf{r})=\int_{\mathcal{A}}\theta(\mathbf{r},\mathbf{r}')\,e_k(\mathbf{r}')\,d\mathbf{r}',
\label{eq:continuous_scattering}
\end{equation}
which, for a passive lossless surface, corresponds to a unitary kernel\footnote{We retain the general unitary model, which is physically realizable through non-reciprocal circuit elements such as circulators \cite{bjornson2025capacity}, the symmetric constraint is left as an extension.}
\begin{equation}
\int_{\mathcal{A}}\theta(\mathbf{r},\mathbf{r}')\,\theta^{*}(\mathbf{r}'',\mathbf{r}')\,d\mathbf{r}'=\delta(\mathbf{r}-\mathbf{r}'').
\label{eq:continuous_unitarity}
\end{equation}
Radiating the surface field toward user $n$ and combining \eqref{eq:continuous_scattering} gives the end-to-end gain
\begin{equation}
h_{n,k}=\int_{\mathcal{A}}f_n(\mathbf{r})\,u_k(\mathbf{r})\,d\mathbf{r}
=\!\iint_{\mathcal{A}\times\mathcal{A}}\!f_n(\mathbf{r})\,\theta(\mathbf{r},\mathbf{r}')\,e_k(\mathbf{r}')\,d\mathbf{r}\,d\mathbf{r}' .
\label{eq:full_cascaded_continuous}
\end{equation}
Because $e_k(\mathbf{r}')=\mathbf{g}(\mathbf{r}')\mathbf{w}_k$ is linear in $\mathbf{w}_k$, the cascaded gain factors as
\begin{align}
h_{n,k}&=\mathbf{c}_n^{\top}\mathbf{w}_k,\nonumber\\
[\mathbf{c}_n]_{k'}&=\iint_{\mathcal{A}\times\mathcal{A}}f_n(\mathbf{r})\,\theta(\mathbf{r},\mathbf{r}')\,g_{k'}(\mathbf{r}')\,d\mathbf{r}\,d\mathbf{r}' ,
\label{eq:cascaded_vector}
\end{align}
where $\mathbf{c}_n\in\mathbb{C}^{N}$ is the beamformer-independent cascaded channel of user $n$. This makes explicit that the scalar gain $h_{n,k}$ jointly embeds the BD-RIS response, through $\theta$ in $\mathbf{c}_n$, and the transmit beamforming, through $\mathbf{w}_k$.

\subsection{Signal Model and SINR}
Using \eqref{eq:cascaded_vector}, the signal received by user $n$ separates into the desired stream, the multiuser interference, and noise,
\begin{equation}
y_n=\underbrace{\mathbf{c}_n^{\top}\mathbf{w}_n\,s_n}_{\text{desired}}
+\underbrace{\sum_{k\neq n}\mathbf{c}_n^{\top}\mathbf{w}_k\,s_k}_{\text{interference}}
+z_n,
\label{eq:received_signal}
\end{equation}
where $z_n\sim\mathcal{CN}(0,\sigma_n^2)$ is \ac{AWGN}. The resulting \ac{SINR} at user $n$ is
\begin{equation}
\mathrm{SINR}_n=\frac{|\mathbf{c}_n^{\top}\mathbf{w}_n|^2}{\sum_{k\neq n}|\mathbf{c}_n^{\top}\mathbf{w}_k|^2+\sigma_n^2},
\label{eq:sinr}
\end{equation}
which depends on both the beamformers, through $\{\mathbf{w}_k\}$, and the surface configuration, through the kernel $\theta$ hidden in $\mathbf{c}_n$. The network utility is the downlink sum rate
\begin{equation}
\mathrm{SR}=\sum_{n=1}^{K}\log_2\bigl(1+\mathrm{SINR}_n\bigr),
\label{eq:sumrate_def}
\end{equation}
to be maximized jointly over $\{\mathbf{w}_k\}$ and the surface in Sec.~\ref{sec:sumrate}.

\section{Modal Representation of the Cascaded Channel}
\label{sec:modal}
The cascaded gain \eqref{eq:full_cascaded_continuous} lives in an infinite-dimensional space, yet under the geometric \ac{LoS} model of Sec.~\ref{sec:cont_channel} the incident and radiated fields occupy a low-dimensional subspace of functions on the aperture, generated by finitely many point-source responses. This section makes that structure explicit for an arbitrary orthonormal modal family: it expands the fields on $L\le M$ modes, reduces the scattering operator to a compact $L\times L$ matrix $\boldsymbol{\Psi}$, and connects $\boldsymbol{\Psi}$ to the physical element-domain matrix $\boldsymbol{\Theta}$ that the hardware implements. The specific construction of the modes is deferred to Sec.~\ref{sec:basis}, where the geometry of the deployment will single out one basis and one dimension. Everything in this section and in the optimization of Sec.~\ref{sec:sumrate} holds verbatim for any orthonormal family.

\subsection{Modal Expansion on an Orthonormal Family}
\label{sec:modal_expansion}
We take the aperture to be planar, lying on the plane $\{y=y_{\mathrm R}\}$ and parameterized by its two in-plane coordinates,
\begin{equation}
\mathcal{A}=\bigl\{(x,y,z)\in\mathbb{R}^3:\ y=y_{\mathrm R},\ (x,z)\in\mathcal{A}_{\parallel}\bigr\},
\label{eq:planar_aperture}
\end{equation}
where $\mathcal{A}_{\parallel}\subset\mathbb{R}^2$ is the in-plane domain. Every $\mathbf{r}\in\mathcal{A}$ is thus identified by $\mathbf{r}_{\parallel}=(x,z)\in\mathcal{A}_{\parallel}$, and, the surface being planar, all aperture integrals in \eqref{eq:continuous_scattering} to \eqref{eq:full_cascaded_continuous} reduce to integrals over $\mathcal{A}_{\parallel}$.

Let $\{\varphi_l\}_{l=0}^{L-1}$, $\varphi_l:\mathcal{A}_{\parallel}\to\mathbb{C}$, be any family that is orthonormal with respect to the Hermitian inner product on the aperture,
\begin{equation}
\langle \varphi_l,\varphi_{l'}\rangle
=\int_{\mathcal{A}_{\parallel}}\varphi_l(\mathbf{r}_{\parallel})\,\varphi_{l'}^{*}(\mathbf{r}_{\parallel})\,d\mathbf{r}_{\parallel}=\delta_{l,l'} .
\label{eq:orthonormality}
\end{equation}
We allow the modes to be complex, since several of the physically motivated families of Sec.~\ref{sec:basis}, e.g., spherical wavefronts, are intrinsically complex. Projecting the two fields onto the family,
\begin{align}
e_k(\mathbf{r}_{\parallel})\approx\sum_{l=0}^{L-1}\alpha_{k,l}\,\varphi_l(\mathbf{r}_{\parallel}),\
f_n(\mathbf{r}_{\parallel})\approx\sum_{l=0}^{L-1}\beta_{n,l}\,\varphi_{l}^{*}(\mathbf{r}_{\parallel}),
\label{eq:truncated_fields}
\end{align}
with coefficients $\alpha_{k,l}=\langle e_k,\varphi_l\rangle$ and $\beta_{n,l}=\langle f_n,\varphi_l^{*}\rangle$ stacked into $\boldsymbol{\alpha}_k,\boldsymbol{\beta}_n\in\mathbb{C}^{L}$. The two expansions are deliberately asymmetric: the incident field $e_k$ is expanded on the modes themselves, whereas the radiated functional $f_n$ is expanded on their conjugates. The reason is that in \eqref{eq:full_cascaded_continuous} the user channel $f_n$ enters through the bilinear (not sesquilinear) pairing $\int f_n(\cdot)\,d\mathbf{r}$: as a linear functional, $f_n$ acts as the inner product with $f_n^{*}$, so it is $f_n^{*}$ that must be well captured by $\mathrm{span}\{\varphi_l\}$. This distinction is invisible for real modes but essential for the complex bases of Sec.~\ref{sec:basis}.

Substituting \eqref{eq:truncated_fields} into \eqref{eq:full_cascaded_continuous} yields the bilinear form
\begin{equation}
h_{n,k}\approx\sum_{l,l'}\beta_{n,l}\,\psi_{l,l'}\,\alpha_{k,l'}=\boldsymbol{\beta}_n^{\top}\boldsymbol{\Psi}\,\boldsymbol{\alpha}_k,
\label{eq:bilinear_continuous}
\end{equation}
where the modal scattering matrix $\boldsymbol{\Psi}\in\mathbb{C}^{L\times L}$ collects the projections of the kernel,
\begin{equation}
[\boldsymbol{\Psi}]_{l,l'}=\psi_{l,l'}=\iint_{\mathcal{A}_{\parallel}\times\mathcal{A}_{\parallel}}\varphi_{l}^{*}(\mathbf{r})\,\theta(\mathbf{r},\mathbf{r}')\,\varphi_{l'}(\mathbf{r}')\,d\mathbf{r}\,d\mathbf{r}' .
\label{eq:physical_kernel}
\end{equation}
Conversely, any kernel of the modal form
\begin{equation}
\theta(\mathbf{r},\mathbf{r}')=\sum_{l,l'}\varphi_{l}(\mathbf{r})\,[\boldsymbol{\Psi}]_{l,l'}\,\varphi_{l'}^{*}(\mathbf{r}')
\label{eq:modal_kernel}
\end{equation}
reproduces \eqref{eq:physical_kernel} exactly: substituting \eqref{eq:modal_kernel} back and invoking the orthonormality \eqref{eq:orthonormality} collapses the double sum through the Kronecker deltas,
\begin{equation}
\psi_{l,l'}=\sum_{m,m'}\psi_{m,m'}\,\delta_{l,m}\,\delta_{l',m'} .
\label{eq:Q_phys_modal}
\end{equation}
The entries of $\boldsymbol{\Psi}$ are thus the beam-domain coupling coefficients, exactly as the entries of a classical BD-RIS scattering matrix are the element-domain ones. The consistency \eqref{eq:Q_phys_modal} concerns only the modal coefficients, however, and does not by itself carry over the passivity constraint \eqref{eq:continuous_unitarity}: as shown in Sec.~\ref{sec:trunc}, the truncated kernel \eqref{eq:modal_kernel} is a partial isometry and must be completed on the orthogonal complement to remain lossless.

\subsection{From the Continuous Aperture to Discrete Elements}
\label{sec:discretization}
The physical surface samples the aperture at $M$ elements located at $\{\mathbf{r}_m\}_{m=1}^{M}$ on a $\lambda/2$ grid. Sampling \eqref{eq:incident_field} gives the discrete incident field
\begin{equation}
[\mathbf{e}_k]_m=e_k(\mathbf{r}_m)=\sum_{k'=1}^{N}w_{k,k'}\,g_{k'}(\mathbf{r}_m),\qquad m=1,\dots,M,
\label{eq:discrete_incident}
\end{equation}
or, compactly, $\mathbf{e}_k=\mathbf{G}\mathbf{w}_k\in\mathbb{C}^{M}$, with $[\mathbf{G}]_{m,k'}=g_{k'}(\mathbf{r}_m)$, $\mathbf{G}\in\mathbb{C}^{M\times N}$, the sampled \ac{BS}-RIS matrix.
At the element level the surface is described by its element-domain scattering matrix $\boldsymbol{\Theta}\in\mathbb{C}^{M\times M}$, the discrete counterpart of the kernel $\theta$, which is what the reconfigurable impedance network physically realizes. The RIS-user response in \eqref{eq:ris_user_scattered_channel} and the reflected field in \eqref{eq:continuous_scattering} are sampled as $\mathbf{f}_n\in\mathbb{C}^{M}$ and $\mathbf{u}_k=\boldsymbol{\Theta}\mathbf{e}_k$, respectively. Therefore, the continuous cascaded gain defined in \eqref{eq:full_cascaded_continuous} becomes
\begin{equation}
h_{n,k}=\mathbf{f}_n^{\top}\boldsymbol{\Theta}\,\mathbf{G}\mathbf{w}_k=\mathbf{f}_n^{\top}\boldsymbol{\Theta}\,\mathbf{e}_k .
\label{eq:discrete_cascaded}
\end{equation}
Likewise, sampling the modal family at the element positions and stacking the modes columnwise yields the modal matrix
\begin{equation}
\boldsymbol{\Phi}=[\,\boldsymbol{\varphi}_0,\boldsymbol{\varphi}_1,\dots,\boldsymbol{\varphi}_{L-1}\,]\in\mathbb{C}^{M\times L},\quad
\boldsymbol{\Phi}^{\mathsf H}\boldsymbol{\Phi}=\mathbf{I}_L,
\label{eq:Phi_def}
\end{equation}
where $[\boldsymbol{\varphi}_l]_m=\varphi_l(\mathbf{r}_{\parallel,m})$ up to a common grid-normalization factor. The discrete orthonormality in \eqref{eq:Phi_def} either is inherited exactly from \eqref{eq:orthonormality} (as for the \ac{DFT} modes on a uniform grid) or is enforced by a thin QR factorization of the sampled family, which leaves the spanned subspace unchanged. Since the dimension of the sampled space is $M$, any family satisfying \eqref{eq:Phi_def} has $1\le L\le M$, and at $L=M$ it is a complete orthonormal basis of $\mathbb{C}^{M}$: the modal description then becomes a mere change of coordinates of the element domain. The two fields are represented by their modal coordinates
\begin{align}
\boldsymbol{\alpha}_k&=\boldsymbol{\Phi}^{\mathsf H}\mathbf{e}_k=\boldsymbol{\Phi}^{\mathsf H}\mathbf{G}\mathbf{w}_k\in\mathbb{C}^{L}, \label{eq:e_modal_approx}\\
\boldsymbol{\beta}_n&=\boldsymbol{\Phi}^{\top}\mathbf{f}_n\in\mathbb{C}^{L}, \label{eq:beta_from_f}
\end{align}
which are the discrete counterparts of the coefficients in \eqref{eq:truncated_fields}. Note again the transpose (not Hermitian) projection of $\mathbf{f}_n$, consistent with the bilinear pairing in \eqref{eq:discrete_cascaded}. The surface is designed through the compact modal matrix $\boldsymbol{\Psi}$, linked to the element-domain scattering matrix by the compression
\begin{equation}
\boldsymbol{\Psi}=\boldsymbol{\Phi}^{\mathsf H}\boldsymbol{\Theta}\boldsymbol{\Phi}\in\mathbb{C}^{L\times L}.
\label{eq:psi_theta_link}
\end{equation}
The two bilinear forms \eqref{eq:discrete_cascaded} and \eqref{eq:bilinear_continuous} describe the same end-to-end gain in two representations: in the modal domain via $\boldsymbol{\Psi}$ and in the element domain through $\boldsymbol{\Theta}$. Only the low-dimensional $\boldsymbol{\Psi}$ is optimized.

\subsection{Lossless Completion and Effective Channel}
\label{sec:trunc}
The naive inverse of \eqref{eq:psi_theta_link}, $\boldsymbol{\Theta}=\boldsymbol{\Phi}\boldsymbol{\Psi}\boldsymbol{\Phi}^{\mathsf H}$, is not admissible: it has rank $L$ and, for $L<M$, violates the discrete image $\boldsymbol{\Theta}\boldsymbol{\Theta}^{\mathsf H}=\mathbf{I}_M$ of the passivity condition \eqref{eq:continuous_unitarity}. Indeed, using $\boldsymbol{\Phi}^{\mathsf H}\boldsymbol{\Phi}=\mathbf{I}_L$ and $\boldsymbol{\Psi}\boldsymbol{\Psi}^{\mathsf H}=\mathbf{I}_L$,
\begin{equation}
\boldsymbol{\Phi}\boldsymbol{\Psi}\boldsymbol{\Phi}^{\mathsf H}\bigl(\boldsymbol{\Phi}\boldsymbol{\Psi}\boldsymbol{\Phi}^{\mathsf H}\bigr)^{\mathsf H}
=\boldsymbol{\Phi}\boldsymbol{\Psi}\underbrace{\boldsymbol{\Phi}^{\mathsf H}\boldsymbol{\Phi}}_{\mathbf{I}_L}\boldsymbol{\Psi}^{\mathsf H}\boldsymbol{\Phi}^{\mathsf H}
=\boldsymbol{\Phi}\boldsymbol{\Phi}^{\mathsf H}=\boldsymbol{\Pi}\neq\mathbf{I}_M,
\label{eq:partial_isometry}
\end{equation}
with $\boldsymbol{\Pi}=\boldsymbol{\Phi}\boldsymbol{\Phi}^{\mathsf H}$ the orthogonal projector onto $\mathrm{range}(\boldsymbol{\Phi})$: the truncated surface is a partial isometry, lossless on the modal subspace and absorbing on its complement. Passivity is restored by completing $\boldsymbol{\Theta}$ with an identity action on the complementary modes,
\begin{equation}
\boldsymbol{\Theta}=\boldsymbol{\Phi}\boldsymbol{\Psi}\boldsymbol{\Phi}^{\mathsf H}+\bigl(\mathbf{I}_M-\boldsymbol{\Phi}\boldsymbol{\Phi}^{\mathsf H}\bigr),
\label{eq:theta_lossless}
\end{equation}
which is unitary for every $\boldsymbol{\Psi}\in\mathcal{U}(L)$: since $\boldsymbol{\Phi}^{\mathsf H}(\mathbf{I}_M-\boldsymbol{\Phi}\boldsymbol{\Phi}^{\mathsf H})=\mathbf{0}$ annihilates the cross terms, $\boldsymbol{\Theta}\boldsymbol{\Theta}^{\mathsf H}=\boldsymbol{\Pi}+(\mathbf{I}_M-\boldsymbol{\Pi})=\mathbf{I}_M$, and \eqref{eq:psi_theta_link} is recovered by left/right multiplication with $\boldsymbol{\Phi}^{\mathsf H}$/$\boldsymbol{\Phi}$.

The matrix \eqref{eq:theta_lossless} is dense, so $L^{2}$ counts the reconfigurable degrees of freedom, not the physical ports: the surface realizes $\boldsymbol{\Theta}$ as a fixed passive mode-forming network implementing $\boldsymbol{\Phi}$, an $L\times L$ tunable core implementing $\boldsymbol{\Psi}$, and a fixed completion, so that only $L^{2}$ entries are adjusted at run time.

Substituting \eqref{eq:theta_lossless} into \eqref{eq:discrete_cascaded} and using \eqref{eq:e_modal_approx} and \eqref{eq:beta_from_f}, the cascaded gain splits into a modal term and a complement term,
\begin{align}
h_{n,k}=\mathbf{f}_n^{\top}\boldsymbol{\Theta}\mathbf{G}\mathbf{w}_k
&=\boldsymbol{\beta}_n^{\top}\boldsymbol{\Psi}\boldsymbol{\Phi}^{\mathsf H}\mathbf{G}\mathbf{w}_k
+\mathbf{f}_n^{\top}\bigl(\mathbf{I}_M-\boldsymbol{\Phi}\boldsymbol{\Phi}^{\mathsf H}\bigr)\mathbf{G}\mathbf{w}_k\nonumber\\
&=\boldsymbol{\gamma}_n^{\top}\mathbf{w}_k,
\label{eq:g_eff_explicit}
\end{align}
so that the effective beam-domain channel carries a fixed offset,
\begin{align}
\boldsymbol{\gamma}_n&=\mathbf{G}^{\top}\boldsymbol{\Phi}^{*}\boldsymbol{\Psi}^{\top}\boldsymbol{\beta}_n+\boldsymbol{\gamma}_n^{0},\nonumber\\
\boldsymbol{\gamma}_n^{0}&=\mathbf{G}^{\top}\bigl(\mathbf{I}_M-\boldsymbol{\Phi}^{*}\boldsymbol{\Phi}^{\top}\bigr)\mathbf{f}_n\in\mathbb{C}^{N},
\label{eq:g_eff_def}
\end{align}
where the conjugates arise from transposing the Hermitian projector, $(\mathbf{I}_M-\boldsymbol{\Phi}\boldsymbol{\Phi}^{\mathsf H})^{\top}=\mathbf{I}_M-\boldsymbol{\Phi}^{*}\boldsymbol{\Phi}^{\top}$, and disappear for real $\boldsymbol{\Phi}$. The first term is the discrete, modal counterpart of the continuous $\mathbf{c}_n$ in \eqref{eq:cascaded_vector}. The offset $\boldsymbol{\gamma}_n^{0}$ is the mirror reflection of the field components outside $\mathrm{range}(\boldsymbol{\Phi})$, is independent of $\boldsymbol{\Psi}$, and is precomputed once. The design thus collapses the surface to the $L\ll M$ degrees of freedom of $\boldsymbol{\Psi}$ while keeping \eqref{eq:g_eff_explicit} exact and the surface exactly passive. Setting $\boldsymbol{\gamma}_n^{0}=\mathbf{0}$ recovers the lossy rank-$L$ model, exact only when the fields lie entirely in $\mathrm{range}(\boldsymbol{\Phi})$.

\begin{remark}
The truncated kernel \eqref{eq:modal_kernel} alone yields the partial isometry $\boldsymbol{\Theta}=\boldsymbol{\Phi}\boldsymbol{\Psi}\boldsymbol{\Phi}^{\mathsf H}$, for which $\boldsymbol{\Theta}\boldsymbol{\Theta}^{\mathsf H}=\boldsymbol{\Pi}\neq\mathbf{I}_M$, so passivity \eqref{eq:continuous_unitarity} holds only at $L=M$. The completion \eqref{eq:theta_lossless} makes $\boldsymbol{\Theta}$ unitary for every $\boldsymbol{\Psi}\in\mathcal{U}(L)$, independently of how the incident energy is distributed: the design $\boldsymbol{\Psi}$ shapes the response inside $\mathrm{range}(\boldsymbol{\Phi})$ while the complementary modes are reflected unchanged. Equivalently, restricting the unitary-manifold search direction to the modal subspace yields $\exp(-\eta\,\boldsymbol{\Phi}\boldsymbol{\Xi}\boldsymbol{\Phi}^{\mathsf H})=\boldsymbol{\Phi}e^{-\eta\boldsymbol{\Xi}}\boldsymbol{\Phi}^{\mathsf H}+(\mathbf{I}_M-\boldsymbol{\Phi}\boldsymbol{\Phi}^{\mathsf H})$, so a geodesic step of $\boldsymbol{\Theta}$ on $\mathcal{U}(M)$ confined to $\mathrm{range}(\boldsymbol{\Phi})$ reduces exactly to the $L\times L$ update of Sec.~\ref{sec:modal_update}.
\end{remark}

\section{The Geometric Modal Basis}
\label{sec:basis}
Sections~\ref{sec:modal_expansion} to \ref{sec:trunc} reduced the surface design to the choice of an orthonormal family $\{\varphi_l\}$, equivalently of the subspace $\mathrm{range}(\boldsymbol{\Phi})$ onto which the incident and radiated fields are projected. Under the \ac{LoS} model of Sec.~\ref{sec:cont_channel} this choice admits a complete and remarkably compact answer, developed in this section: the fields on the aperture live in the span of $N{+}K$ known spherical-wave atoms, the atoms are determined by the terminal positions alone, and a basis of dimension $L=2\rho$ built from them provably renders the modal architecture equivalent to an unconstrained $M\times M$ surface.

\subsection{Spherical-Wave Atoms and the Active Subspace}
\label{sec:atoms}
Collecting the $M$ samples of the aperture response to a point source at $\mathbf{s}\in\mathbb{R}^3$ into the spherical-wave atom
\begin{equation}
[\mathbf{a}(\mathbf{s})]_m=\mathcal{G}(\mathbf{r}_m,\mathbf{s})=\frac{e^{-j\kappa_0\|\mathbf{r}_m-\mathbf{s}\|}}{\|\mathbf{r}_m-\mathbf{s}\|},
\label{eq:atom}
\end{equation}
the sampled channel matrices are, by \eqref{eq:bs_ris_scattered_channel} and \eqref{eq:ris_user_scattered_channel}, exactly collections of atoms at the terminal positions,
\begin{equation}
\mathbf{G}=\sqrt{G_{\mathrm{BS}}}\,\bigl[\mathbf{a}(\mathbf{b}_1),\dots,\mathbf{a}(\mathbf{b}_N)\bigr],\quad
\mathbf{F}=\bigl[\mathbf{a}(\mathbf{u}_1),\dots,\mathbf{a}(\mathbf{u}_K)\bigr].
\label{eq:GF_atoms}
\end{equation}
Since the cascaded gains \eqref{eq:discrete_cascaded} can be written $h_{n,k}=\mathbf{f}_n^{\top}\boldsymbol{\Theta}\mathbf{e}_k=(\mathbf{f}_n^{*})^{\mathsf H}\boldsymbol{\Theta}\mathbf{e}_k$, with $\mathbf{e}_k\in\mathrm{range}(\mathbf{G})$ and $\mathbf{f}_n^{*}\in\mathrm{range}(\mathbf{F}^{*})$, they depend on $\boldsymbol{\Theta}$ only through its compression to the active subspace
\begin{equation}
\mathcal{S}=\mathrm{span}\bigl([\mathbf{G},\ \mathbf{F}^{*}]\bigr),\qquad \rho=\dim\mathcal{S}\le N+K,
\label{eq:active_subspace}
\end{equation}
namely the contraction $\mathbf{T}=\boldsymbol{\Pi}_{\mathcal{S}}\boldsymbol{\Theta}\boldsymbol{\Pi}_{\mathcal{S}}$, with $\boldsymbol{\Pi}_{\mathcal{S}}$ the projector onto $\mathcal{S}$ and $\|\mathbf{T}\|_2\le1$ because $\boldsymbol{\Theta}$ is unitary. This contraction is the pivotal object of the paper: since the rates depend on $\boldsymbol{\Theta}$ only through $\mathbf{T}$, it suffices to realize every admissible $\mathbf{T}$, and Proposition~\ref{prop:halmos} shows that a unitary of dimension $2\rho$ is enough to do so exactly. The conjugation of $\mathbf{F}$ is the discrete image of the bilinear (not sesquilinear) pairing already noted in \eqref{eq:truncated_fields}. Two structural facts sharpen \eqref{eq:active_subspace}. First, the atoms in \eqref{eq:GF_atoms} are position-determined: constructing a basis for $\mathcal{S}$ requires knowledge of where the terminals are, not of any instantaneous channel state. Under the quasi-static positions assumed throughout, the basis is a deployment quantity, recomputed on the localization timescale. Second, the effective dimension of $\mathrm{range}(\mathbf{G})$ is limited not only by the antenna count but by the near-field coupling geometry: the numerical rank of the $N$ \ac{BS}-side atoms saturates at the aperture-product degrees of freedom, $r_G=\min\bigl(N,\;\eta_{\mathrm{dof}}\,D_{\mathrm{BS}}D/(\lambda\, d_{\mathrm{BS}})\bigr)$ with $\eta_{\mathrm{dof}}$ a geometry constant of order one, so that in practice $\rho=r_G+K$ with $r_G$ read off the deployment. The $K$ user atoms, sampled at well-separated positions across the near field of an extremely large panel, are nearly orthogonal and contribute their full count.

\subsection{Basis Construction}
\label{sec:construction}
Two facts delimit what any basis can achieve. First, only the subspace matters, not the particular family spanning it.

\begin{proposition}[Subspace dependence]
\label{prop:subspace}
Let $\boldsymbol{\Phi}'=\boldsymbol{\Phi}\mathbf{O}$ with $\mathbf{O}\in\mathcal{U}(L)$. Then the sets of element-domain surfaces $\{\boldsymbol{\Theta}\}$ generated by \eqref{eq:theta_lossless} under $\boldsymbol{\Phi}$ and $\boldsymbol{\Phi}'$ coincide, and hence so do all achievable rate tuples. The performance of the modal architecture at dimension $L$ depends on the basis only through the projector $\boldsymbol{\Pi}=\boldsymbol{\Phi}\boldsymbol{\Phi}^{\mathsf H}$.
\end{proposition}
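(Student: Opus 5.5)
The argument is a reparametrization of the modal core. Fix $\mathbf{O}\in\mathcal{U}(L)$ and set $\boldsymbol{\Phi}'=\boldsymbol{\Phi}\mathbf{O}$. First I will check that $\boldsymbol{\Phi}'$ is an admissible modal matrix in the sense of \eqref{eq:Phi_def}: indeed $\boldsymbol{\Phi}'^{\mathsf H}\boldsymbol{\Phi}'=\mathbf{O}^{\mathsf H}\mathbf{I}_L\mathbf{O}=\mathbf{I}_L$. Second, the two projectors coincide: $\boldsymbol{\Phi}'\boldsymbol{\Phi}'^{\mathsf H}=\boldsymbol{\Phi}\mathbf{O}\mathbf{O}^{\mathsf H}\boldsymbol{\Phi}^{\mathsf H}=\boldsymbol{\Pi}$. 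Hence the fixed completion term $\mathbf{I}_M-\boldsymbol{\Phi}\boldsymbol{\Phi}^{\mathsf H}$ in \eqref{eq:theta_lossless} is identical under both bases.

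Next I will exhibit a bijection of $\mathcal{U}(L)$ onto itself that carries one family of surfaces onto the other. Define $\boldsymbol{\Psi}'=\mathbf{O}^{\mathsf H}\boldsymbol{\Psi}\mathbf{O}$. This conjugation is a group automorphism of $\mathcal{U}(L)$, with inverse $\boldsymbol{\Psi}=\mathbf{O}\boldsymbol{\Psi}'\mathbf{O}^{\mathsf H}$. Substituting gives
\begin{equation*}
\boldsymbol{\Phi}'\boldsymbol{\Psi}'\boldsymbol{\Phi}'^{\mathsf H}=\boldsymbol{\Phi}\mathbf{O}\mathbf{O}^{\mathsf H}\boldsymbol{\Psi}\mathbf{O}\mathbf{O}^{\mathsf H}\boldsymbol{\Phi}^{\mathsf H}=\boldsymbol{\Phi}\boldsymbol{\Psi}\boldsymbol{\Phi}^{\mathsf H}.
\end{equation*}
Consequently the surface $\boldsymbol{\Theta}$ of \eqref{eq:theta_lossless} generated by $(\boldsymbol{\Phi},\boldsymbol{\Psi})$ equals the one generated by $(\boldsymbol{\Phi}',\boldsymbol{\Psi}')$. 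Since every $\boldsymbol{\Psi}'$ arises from exactly one $\boldsymbol{\Psi}$, and conversely, the two sets $\{\boldsymbol{\Theta}\}$ coincide.

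Rates then follow directly. By \eqref{eq:discrete_cascaded}, the cascaded gains $h_{n,k}=\mathbf{f}_n^{\top}\boldsymbol{\Theta}\mathbf{G}\mathbf{w}_k$, and therefore the SINRs \eqref{eq:sinr}, depend on the modal design only through $\boldsymbol{\Theta}$. The beamformer constraint set does not involve the basis. So for every feasible pair $(\{\mathbf{w}_k\},\boldsymbol{\Theta})$ under one basis there is a pair giving the same rate tuple under the other, and the achievable rate regions are equal.

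For the final sentence, any two orthonormal $\boldsymbol{\Phi}_1,\boldsymbol{\Phi}_2\in\mathbb{C}^{M\times L}$ with the same projector have the same range. Then $\mathbf{O}=\boldsymbol{\Phi}_1^{\mathsf H}\boldsymbol{\Phi}_2$ satisfies $\boldsymbol{\Phi}_1\mathbf{O}=\boldsymbol{\Pi}\boldsymbol{\Phi}_2=\boldsymbol{\Phi}_2$, and it is unitary because $\mathbf{O}^{\mathsf H}\mathbf{O}=\boldsymbol{\Phi}_2^{\mathsf H}\boldsymbol{\Pi}\boldsymbol{\Phi}_2=\mathbf{I}_L$. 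The first part therefore applies. This converse step, showing that equal projectors imply a unitary relation between the bases, is the only point requiring care. Everything else is direct substitution.
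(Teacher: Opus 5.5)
Your proof is correct and follows the paper's argument: the conjugation $\boldsymbol{\Psi}\mapsto\mathbf{O}^{\mathsf H}\boldsymbol{\Psi}\mathbf{O}$ is a bijection of $\mathcal{U}(L)$ that carries one family of surfaces onto the other, and the completion term depends only on $\boldsymbol{\Pi}$. Your explicit converse, that equal projectors force $\boldsymbol{\Phi}_2=\boldsymbol{\Phi}_1\mathbf{O}$ with $\mathbf{O}=\boldsymbol{\Phi}_1^{\mathsf H}\boldsymbol{\Phi}_2$ unitary, supplies a step the paper leaves implicit for the final sentence.
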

\begin{proof}
$\boldsymbol{\Phi}'\boldsymbol{\Psi}'(\boldsymbol{\Phi}')^{\mathsf H}=\boldsymbol{\Phi}(\mathbf{O}\boldsymbol{\Psi}'\mathbf{O}^{\mathsf H})\boldsymbol{\Phi}^{\mathsf H}$ and $\boldsymbol{\Psi}'\mapsto\mathbf{O}\boldsymbol{\Psi}'\mathbf{O}^{\mathsf H}$ is a bijection of $\mathcal{U}(L)$. The complement term $\mathbf{I}_M-\boldsymbol{\Pi}$ in \eqref{eq:theta_lossless} depends only on $\boldsymbol{\Pi}$.
\end{proof}

By Proposition~\ref{prop:subspace} the design target is a subspace, and by \eqref{eq:active_subspace} the subspace that matters is $\mathcal{S}$. The geometric modal basis is therefore obtained directly from the stacked atom matrix: compute the thin \ac{SVD}
\begin{equation}
[\mathbf{G},\ \mathbf{F}^{*}]=\mathbf{U}\boldsymbol{\Sigma}\mathbf{V}^{\mathsf H},\qquad
\boldsymbol{\Phi}_{\mathcal{S}}=\mathbf{U}(:,1{:}\rho),
\label{eq:svd_construction}
\end{equation}
with $\rho$ the numerical rank of $\boldsymbol{\Sigma}$ at a fixed relative tolerance\footnote{In practice $\rho$ can be $\leq N{+}K$ when the atoms are correlated, which only shrinks the basis further: the construction is exact on the $\epsilon$-truncated channel, the discarded singular values negligibly perturb the rates.}, and complete it with an arbitrary orthonormal complement $\boldsymbol{\Phi}_{\perp}\in\mathbb{C}^{M\times\rho}$, $\boldsymbol{\Phi}_{\perp}^{\mathsf H}\boldsymbol{\Phi}_{\mathcal{S}}=\mathbf{0}$,
\begin{equation}
\boldsymbol{\Phi}=[\,\boldsymbol{\Phi}_{\mathcal{S}},\ \boldsymbol{\Phi}_{\perp}\,]\in\mathbb{C}^{M\times L},\qquad L=2\rho .
\label{eq:phi_completion}
\end{equation}
The role of the second block is purely to provide unitary slack: by Proposition~\ref{prop:subspace} its particular choice is immaterial, and the doubling $L=2\rho$ is exactly what Proposition~\ref{prop:halmos} below consumes. The construction costs one thin \ac{SVD}, $O\bigl(M(N{+}K)^2\bigr)$, performed once per localization update.

\begin{remark}[Covariance formulation and computation]
\label{rem:covariance}
Writing $\mathbf{A}=[\mathbf{G},\,\mathbf{F}^{*}]\in\mathbb{C}^{M\times(N+K)}$ for the atom stack, the basis \eqref{eq:svd_construction} admits an equivalent second-order description: $\boldsymbol{\Phi}_{\mathcal{S}}$ is the dominant eigenspace of the aperture correlation matrix
\begin{align}
\mathbf{R}=\mathbf{A}\mathbf{A}^{\mathsf H}
&=\sum_{\mathbf{s}\in\{\mathbf{b}_k\}} G_{\mathrm{BS}}\,\mathbf{a}(\mathbf{s})\mathbf{a}(\mathbf{s})^{\mathsf H}
+\sum_{\mathbf{s}\in\{\mathbf{u}_n\}} \mathbf{a}^{*}(\mathbf{s})\bigl(\mathbf{a}^{*}(\mathbf{s})\bigr)^{\mathsf H},
\label{eq:correlation_operator}
\end{align}
i.e., the correlation operator of the aperture field under the empirical measure placed on the realized terminal positions, with conjugated atoms on the user side reflecting the bilinear pairing. The eigenvectors of $\mathbf{R}$ are the left singular vectors of $\mathbf{A}$. Since $\mathbf{A}\mathbf{A}^{\mathsf H}$ and the $(N{+}K)\times(N{+}K)$ Gram matrix $\mathbf{A}^{\mathsf H}\mathbf{A}$ share their nonzero spectrum $\{\sigma_i^2\}$, the basis is recovered from the small Hermitian eigenproblem $\mathbf{A}^{\mathsf H}\mathbf{A}=\mathbf{V}\boldsymbol{\Sigma}^{2}\mathbf{V}^{\mathsf H}$ as $\boldsymbol{\Phi}_{\mathcal{S}}=\mathbf{A}\,\mathbf{V}(:,1{:}\rho)\,\boldsymbol{\Sigma}_{1:\rho}^{-1}$ which is precisely what the thin \ac{SVD} in \eqref{eq:svd_construction} computes. Either route reduces the construction to orthonormalizing $N{+}K$ analytically known spherical-wave vectors, rank-revealed by the spectrum of an $(N{+}K)\times(N{+}K)$ Hermitian matrix, at total cost $O\bigl(M(N{+}K)^2\bigr)$. The covariance view indicates the natural generalization: replacing the empirical measure in \eqref{eq:correlation_operator} by a distribution over position-uncertainty regions (or over scattered multipath, when present) turns $\mathbf{R}$ into a genuine ensemble covariance whose dominant eigenspace is the Karhunen-Lo\`eve basis, mean-square optimal at every $L$ by the Ky Fan principle. The geometric basis used here is the deterministic endpoint of that family, and its robustness to position uncertainty is quantified through it.
\end{remark}

\subsection{Exactness at $L=2\rho$}
\label{sec:criteria}
The dimension $L=2\rho$ in \eqref{eq:phi_completion} is exactly what attaining fully-connected performance requires.

\begin{proposition}[Exactness of the geometric basis]
\label{prop:halmos}
If $\mathcal{S}\subseteq\mathrm{range}(\boldsymbol{\Phi})$ and $L\ge 2\rho$, then
\begin{equation}
\max_{\boldsymbol{\Psi}\in\,\mathcal{U}(L)}\mathrm{SR}\bigl(\boldsymbol{\Theta}(\boldsymbol{\Psi})\bigr)
=\max_{\boldsymbol{\Theta}\in\,\mathcal{U}(M)}\mathrm{SR}(\boldsymbol{\Theta}),
\end{equation}
for the sum rate \eqref{eq:sumrate_def} maximized over the beamformers under the power budget and, more generally, for any utility depending on the surface only through the gains $\{h_{n,k}\}$. In particular, the basis \eqref{eq:phi_completion} attains the fully-connected optimum at $L=2\rho\le 2(N{+}K)$, independently of $M$.
\end{proposition}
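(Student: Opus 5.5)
The proof has two directions. The inequality ``$\le$'' is immediate: by the unitarity check after \eqref{eq:theta_lossless}, every $\boldsymbol{\Theta}(\boldsymbol{\Psi})$ with $\boldsymbol{\Psi}\in\mathcal{U}(L)$ lies in $\mathcal{U}(M)$. Hence the modal feasible set is a subset of the fully-connected one, and any utility maximized over the beamformers can only be smaller on it. The substance lies in ``$\ge$''. For this I will use the observation following \eqref{eq:active_subspace}: every gain $h_{n,k}=(\mathbf{f}_n^{*})^{\mathsf H}\boldsymbol{\Theta}\mathbf{G}\mathbf{w}_k$ depends on $\boldsymbol{\Theta}$ only through the compression $\mathbf{T}=\boldsymbol{\Pi}_{\mathcal{S}}\boldsymbol{\Theta}\boldsymbol{\Pi}_{\mathcal{S}}$, because $\mathbf{e}_k\in\mathcal{S}$ and $\mathbf{f}_n^{*}\in\mathcal{S}$. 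It therefore suffices to show the following. For every unitary $\boldsymbol{\Theta}^\star$, in particular a fully-connected maximizer, there is a $\boldsymbol{\Psi}\in\mathcal{U}(L)$ with $\boldsymbol{\Pi}_{\mathcal{S}}\boldsymbol{\Theta}(\boldsymbol{\Psi})\boldsymbol{\Pi}_{\mathcal{S}}=\boldsymbol{\Pi}_{\mathcal{S}}\boldsymbol{\Theta}^\star\boldsymbol{\Pi}_{\mathcal{S}}$. Then all gains coincide for every choice of beamformers, so the optima coincide as well. This argument also covers any utility depending only on $\{h_{n,k}\}$.

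To build $\boldsymbol{\Psi}$, I work in coordinates. Since $\mathcal{S}\subseteq\mathrm{range}(\boldsymbol{\Phi})$ and Proposition~\ref{prop:subspace} allows a rotation of the basis, I may take $\boldsymbol{\Phi}=[\boldsymbol{\Phi}_{\mathcal{S}},\boldsymbol{\Phi}_{\mathrm{r}}]$ with $\boldsymbol{\Phi}_{\mathcal{S}}$ an orthonormal basis of $\mathcal{S}$ and $\boldsymbol{\Phi}_{\mathrm r}$ the remaining $L-\rho\ge\rho$ columns. Let $\mathbf{C}=\boldsymbol{\Phi}_{\mathcal{S}}^{\mathsf H}\boldsymbol{\Theta}^\star\boldsymbol{\Phi}_{\mathcal{S}}\in\mathbb{C}^{\rho\times\rho}$. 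This matrix satisfies $\|\mathbf{C}\|_2\le1$, because it is a compression of a unitary. Next, $\boldsymbol{\Phi}_{\mathcal{S}}^{\mathsf H}\boldsymbol{\Theta}(\boldsymbol{\Psi})\boldsymbol{\Phi}_{\mathcal{S}}$ equals the top-left $\rho\times\rho$ block of $\boldsymbol{\Psi}$: the completion term $\mathbf{I}_M-\boldsymbol{\Phi}\boldsymbol{\Phi}^{\mathsf H}$ vanishes after compression to $\mathcal{S}\subseteq\mathrm{range}(\boldsymbol{\Phi})$. The problem therefore reduces to embedding the contraction $\mathbf{C}$ as the leading block of an $L\times L$ unitary. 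Halmos' unitary dilation does exactly this,
\begin{equation}
\boldsymbol{\Psi}_0=\begin{bmatrix}\mathbf{C} & (\mathbf{I}_\rho-\mathbf{C}\mathbf{C}^{\mathsf H})^{1/2}\\ (\mathbf{I}_\rho-\mathbf{C}^{\mathsf H}\mathbf{C})^{1/2} & -\mathbf{C}^{\mathsf H}\end{bmatrix}\in\mathcal{U}(2\rho),
\end{equation}
and for $L>2\rho$ I pad it as $\boldsymbol{\Psi}=\mathrm{blkdiag}(\boldsymbol{\Psi}_0,\mathbf{I}_{L-2\rho})$.

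The main technical step is verifying that $\boldsymbol{\Psi}_0$ is unitary. The diagonal blocks of $\boldsymbol{\Psi}_0\boldsymbol{\Psi}_0^{\mathsf H}$ are immediate. The off-diagonal blocks require the intertwining identity $\mathbf{C}(\mathbf{I}-\mathbf{C}^{\mathsf H}\mathbf{C})^{1/2}=(\mathbf{I}-\mathbf{C}\mathbf{C}^{\mathsf H})^{1/2}\mathbf{C}$. I would prove it from the SVD of $\mathbf{C}$, or by applying the identity $\mathbf{C}p(\mathbf{C}^{\mathsf H}\mathbf{C})=p(\mathbf{C}\mathbf{C}^{\mathsf H})\mathbf{C}$ for polynomials $p$ and passing to the square root by continuity. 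The conditions $\|\mathbf{C}\|_2\le1$ ensure that the square roots exist.

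This step also explains the doubling to $L=2\rho$. A general contraction has no unitary $\rho\times\rho$ realization, and the slack block $\boldsymbol{\Phi}_\perp$ of \eqref{eq:phi_completion} provides exactly the defect spaces that the dilation needs. The basis \eqref{eq:phi_completion} satisfies the hypothesis with equality, $L=2\rho\le2(N+K)$, and $\rho$ is independent of $M$. This gives the final claim. The caveat of the footnote on numerical rank only affects the $\epsilon$-truncated channel.
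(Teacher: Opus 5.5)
Your proposal is correct and follows essentially the same route as the paper: the inclusion $\boldsymbol{\Theta}(\boldsymbol{\Psi})\in\mathcal{U}(M)$ gives $(\le)$, all gains are reduced to the compression of $\boldsymbol{\Theta}$ onto $\mathcal{S}$, and the Halmos dilation of that contraction is embedded into $\mathcal{U}(L)$. The differences are cosmetic. You rotate the basis via Proposition~\ref{prop:subspace} and pad with $\mathbf{I}_{L-2\rho}$, where the paper uses an isometry $\mathbf{J}$ and the completion term $\mathbf{I}_L-\mathbf{J}\mathbf{J}^{\mathsf H}$. You also verify unitarity of the dilation through the intertwining identity, which the paper simply takes from the Halmos reference.
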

\begin{proof}
The inequality $(\le)$ holds because $\boldsymbol{\Theta}(\boldsymbol{\Psi})\in\mathcal{U}(M)$ by \eqref{eq:theta_lossless}. For $(\ge)$, fix any $\boldsymbol{\Theta}_{\star}\in\mathcal{U}(M)$ and let $\mathbf{T}=\boldsymbol{\Pi}_{\mathcal{S}}\boldsymbol{\Theta}_{\star}\boldsymbol{\Pi}_{\mathcal{S}}$ be its compression to $\mathcal{S}$, a contraction on a $\rho$-dimensional space. By the Halmos unitary dilation \cite{halmos1950normal}, the $2\rho\times 2\rho$ matrix
\begin{equation}
\mathbf{U}_{\mathbf{T}}=\begin{bmatrix}\mathbf{T} & (\mathbf{I}-\mathbf{T}\mathbf{T}^{\mathsf H})^{1/2}\\ (\mathbf{I}-\mathbf{T}^{\mathsf H}\mathbf{T})^{1/2} & -\mathbf{T}^{\mathsf H}\end{bmatrix}
\label{eq:halmos}
\end{equation}
is unitary with upper-left $\rho\times\rho$ block $\mathbf{T}$. Since $L\ge 2\rho$ and $\mathcal{S}\subseteq\mathrm{range}(\boldsymbol{\Phi})$, pick an isometry $\mathbf{J}\in\mathbb{C}^{L\times 2\rho}$ whose first $\rho$ columns are the coordinates of an orthonormal basis of $\mathcal{S}$ in $\mathrm{range}(\boldsymbol{\Phi})$ and whose remaining $\rho$ columns extend it orthonormally within those coordinates (possible since $L\ge2\rho$), and set $\boldsymbol{\Psi}=\mathbf{J}\mathbf{U}_{\mathbf{T}}\mathbf{J}^{\mathsf H}+(\mathbf{I}_L-\mathbf{J}\mathbf{J}^{\mathsf H})\in\mathcal{U}(L)$. Then $\boldsymbol{\Theta}(\boldsymbol{\Psi})$ has the same compression $\mathbf{T}$ on $\mathcal{S}$. Since $\mathbf{f}_n^{*}\in\mathcal{S}$ and $\mathbf{e}_k=\mathbf{G}\mathbf{w}_k\in\mathcal{S}$ for every beamformer, the effective channels $\mathbf{f}_n^{\top}\boldsymbol{\Theta}(\boldsymbol{\Psi})\mathbf{G}$ and $\mathbf{f}_n^{\top}\boldsymbol{\Theta}_{\star}\mathbf{G}$ coincide as row vectors for all $n$. Hence all gains $h_{n,k}$, the utility at any fixed $\{\mathbf{w}_k\}$, and in particular its maximum over the beamformers, match those of $\boldsymbol{\Theta}_{\star}$.
\end{proof}

Proposition~\ref{prop:halmos} is the load-bearing statement of the paper: it converts the $M$-dimensional surface design into an exactly equivalent problem of dimension $2\rho\le 2(N{+}K)$, with $M$ entering only through the one-time compression of Sec.~\ref{sec:construction}. The next section solves that problem. Sec.~\ref{sec:architectures} then situates the result against the conventional element-domain architectures, whose cost-performance frontier it strictly dominates.
\section{Sum-Rate Maximization}
\label{sec:sumrate}
We jointly design the beamformers $\{\mathbf{w}_k\}_{k=1}^{K}$ and the modal configuration $\boldsymbol{\Psi}$ to maximize the downlink sum spectral efficiency:
\begin{subequations}
\begin{align}
\max_{\{\mathbf{w}_k\},\,\boldsymbol{\Psi}}\ &\ \mathrm{SR}\triangleq\sum_{n=1}^{K}\log_2\!\bigl(1+\mathrm{SINR}_n\bigr) \label{eq:sumrate_obj_modal}\\
\mathrm{s.t.}\ &\ \boldsymbol{\Psi}^{\mathsf{H}}\boldsymbol{\Psi}=\mathbf{I}_L, \label{eq:psi_unitary_modal}\\
&\ \sum_{k=1}^{K}\|\mathbf{w}_k\|^2\le P, \label{eq:power_constraint_modal}
\end{align}
\end{subequations}
where the \ac{SINR} is now $\mathrm{SINR}_n=|\boldsymbol{\gamma}_n^{\top}\mathbf{w}_n|^2/(\sum_{k\neq n}|\boldsymbol{\gamma}_n^{\top}\mathbf{w}_k|^2+\sigma_n^2)$ with $\boldsymbol{\gamma}_n$ from \eqref{eq:g_eff_def}. This is exactly the \ac{SINR} \eqref{eq:sinr} of the original signal model, rewritten through the exact split \eqref{eq:g_eff_explicit}, so no approximation is introduced by the modal formulation. Constraint \eqref{eq:psi_unitary_modal} keeps the surface passive, by \eqref{eq:theta_lossless}, and \eqref{eq:power_constraint_modal} is the transmit power budget. The formulation holds for any orthonormal $\boldsymbol{\Phi}$, in particular the geometric basis of Sec.~\ref{sec:basis}, entering only through the precomputed quantities $\mathbf{G}^{\top}\boldsymbol{\Phi}^{*}$, $\{\boldsymbol{\beta}_n\}$, and $\{\boldsymbol{\gamma}_n^{0}\}$. The problem is non-convex, owing to the bilinear coupling of $\{\mathbf{w}_k\}$ and $\boldsymbol{\Psi}$ inside the \ac{SINR} and to the unitary manifold constraint. We address it by \ac{AO}, alternating a \ac{WMMSE} update of the beamformers with a Riemannian update of $\boldsymbol{\Psi}$.

\subsection{WMMSE Reformulation}
For each user we introduce a receive scalar $v_n\in\mathbb{C}$ and a weight $q_n>0$. The \ac{MSE} of the soft estimate $v_n^{*}y_n$ is
\begin{align}
\mathrm{MSE}_n&=\mathbb{E}\bigl[|v_n^{*}y_n-s_n|^2\bigr]\nonumber\\
&=|v_n|^2\Bigl(\sum_{k=1}^{K}|\boldsymbol{\gamma}_n^{\top}\mathbf{w}_k|^2+\sigma_n^2\Bigr)-2\Re\{v_n^{*}\boldsymbol{\gamma}_n^{\top}\mathbf{w}_n\}+1 .
\label{eq:mse_n_reform}
\end{align}
Minimizing $\mathrm{MSE}_n$ over $v_n$ gives the \ac{MMSE} receiver
\begin{equation}
v_n^{\star}=\frac{\boldsymbol{\gamma}_n^{\top}\mathbf{w}_n}{\sum_{k=1}^{K}|\boldsymbol{\gamma}_n^{\top}\mathbf{w}_k|^2+\sigma_n^2},
\label{eq:mmse_receiver}
\end{equation}
whose optimal value satisfies $\mathrm{MSE}_n^{\star}=1/(1+\mathrm{SINR}_n)$. Defining the weighted \ac{MSE}
\begin{equation}
\mathrm{WMSE}_n=q_n\,\mathrm{MSE}_n-\ln q_n,
\label{eq:wmse_n_reform}
\end{equation}
its minimizer over $q_n$ is $q_n^{\star}=1/\mathrm{MSE}_n^{\star}=1+\mathrm{SINR}_n$, at which $\mathrm{WMSE}_n^{\star}=1-\ln(1+\mathrm{SINR}_n)$. Since maximizing the sum rate in bits or nats differs only by the constant $\ln 2$, the design \eqref{eq:sumrate_obj_modal} is equivalent to
\begin{equation}
\min_{\{\mathbf{w}_k\},\,\{v_n,q_n\},\,\boldsymbol{\Psi}}\ \mathcal{L}\triangleq\sum_{n=1}^{K}\mathrm{WMSE}_n
\quad\mathrm{s.t.}\quad\eqref{eq:psi_unitary_modal},\ \eqref{eq:power_constraint_modal}.
\label{eq:wmmse_equiv}
\end{equation}
For fixed $\{v_n,q_n\}$ the objective is quadratic in $\{\mathbf{w}_k\}$ and in $\boldsymbol{\Psi}$, which is what enables the closed-form and manifold updates below.

\subsection{Beamforming Update}
With $\boldsymbol{\Psi}$ and $\{v_n,q_n\}$ fixed, and stacking the beamformers into $\mathbf{W}=[\mathbf{w}_1,\dots,\mathbf{w}_K]\in\mathbb{C}^{N\times K}$, the $\boldsymbol{\Psi}$-independent part of \eqref{eq:wmmse_equiv} becomes
\begin{align}
&\mathcal{L}(\mathbf{W})=\sum_{n=1}^{K}q_n\Bigl(|v_n|^2\sum_{k}|\boldsymbol{\gamma}_n^{\top}\mathbf{w}_k|^2-2\Re\{v_n^{*}\boldsymbol{\gamma}_n^{\top}\mathbf{w}_n\}\Bigr)\nonumber\\
&\overset{(a)}{=}\sum_{n=1}^{K}q_n\Bigl(|v_n|^2\sum_{k}\boldsymbol{\epsilon}_k^{\top}\mathbf{W}^{\mathsf{H}}\boldsymbol{\gamma}_n^{*}\boldsymbol{\gamma}_n^{\top}\mathbf{W}\boldsymbol{\epsilon}_k-2\Re\{\boldsymbol{\epsilon}_n^{\mathsf{H}}\mathbf{W}^{\mathsf{H}}\boldsymbol{\gamma}_n^{*}v_n\}\Bigr)\nonumber\\
&\overset{(b)}{=}\mathrm{Tr}\bigl(\mathbf{W}^{\mathsf{H}}\mathbf{A}\mathbf{W}\bigr)-2\Re\bigl\{\mathrm{Tr}\bigl(\mathbf{W}^{\mathsf{H}}\mathbf{B}\bigr)\bigr\},
\label{eq:LW_quadratic}
\end{align}
with
\begin{equation}
\mathbf{A}=\sum_{n}q_n|v_n|^2\,\boldsymbol{\gamma}_n^{*}\boldsymbol{\gamma}_n^{\top}\succeq0,\qquad
\mathbf{B}=\sum_{n}q_n v_n\,\boldsymbol{\gamma}_n^{*}\boldsymbol{\epsilon}_n^{\mathsf{H}},
\label{eq:AB_def}
\end{equation}
where $\boldsymbol{\epsilon}_n$ is the $n$-th canonical vector. Step $(a)$ uses $\mathbf{w}_k=\mathbf{W}\boldsymbol{\epsilon}_k$, and step $(b)$ uses $\sum_k\boldsymbol{\epsilon}_k^{\top}(\cdot)\boldsymbol{\epsilon}_k=\mathrm{Tr}(\cdot)$ together with the circular trace property. The power constraint \eqref{eq:power_constraint_modal} is enforced through a multiplier $\mu\ge0$, and setting $\nabla_{\mathbf{W}^{*}}\mathcal{L}=\mathbf{0}$ yields
\begin{equation}
\mathbf{W}^{\star}=(\mathbf{A}+\mu\mathbf{I}_N)^{-1}\mathbf{B},
\label{eq:W_closedform_deriv}
\end{equation}
where $\mu$ is the smallest nonnegative value for which $\mathrm{Tr}\bigl((\mathbf{W}^{\star})^{\mathsf{H}}\mathbf{W}^{\star}\bigr)\le P$. The value $\mu=0$ applies whenever the unconstrained solution $\mathbf{A}^{-1}\mathbf{B}$ already meets the budget, and \eqref{eq:W_closedform_deriv} is then the exact minimizer of \eqref{eq:LW_quadratic} under \eqref{eq:power_constraint_modal}. In practice we diagonalize $\mathbf{A}=\mathbf{U}\mathrm{diag}(\mathbf{a})\mathbf{U}^{\mathsf H}$ once, after which the power $\mathrm{Tr}((\mathbf{W}^{\star})^{\mathsf H}\mathbf{W}^{\star})=\sum_{i}p_i/(a_i+\mu)^2$ is a scalar, monotonically decreasing function of $\mu$, and the multiplier is found by bisection at $O(K)$ per evaluation, without ever solving a near-singular system.

The exact minimizer may use less than the full budget. The following elementary fact shows that projecting it radially onto the power sphere is not a heuristic but a guaranteed ascent step on the original objective.
\begin{lemma}[Radial power monotonicity]
\label{lem:radial}
Fix the surface, hence the effective channels $\{\boldsymbol{\gamma}_n\}$. For any beamformers $\mathbf{W}$ with $\|\mathbf{W}\|_{\mathrm F}^{2}\le P$ and any $\alpha\ge 1$ with $\alpha^2\|\mathbf{W}\|_{\mathrm F}^{2}\le P$, the sum rate \eqref{eq:sumrate_def} satisfies $\mathrm{SR}(\alpha\mathbf{W})\ge\mathrm{SR}(\mathbf{W})$, with strict inequality if any user has nonzero desired power. Consequently the optimum of \eqref{eq:sumrate_obj_modal} is attained on the sphere $\|\mathbf{W}\|_{\mathrm F}^{2}=P$, and the projection $\mathbf{W}\mapsto\sqrt{P}\,\mathbf{W}/\|\mathbf{W}\|_{\mathrm F}$ never decreases the sum rate.
\end{lemma}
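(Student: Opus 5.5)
The plan is to reduce the statement to a per-user monotonicity of the SINR under a common scaling of all beamformers. Fix the surface, so the effective channels $\{\boldsymbol{\gamma}_n\}$ of \eqref{eq:g_eff_def} are constants. For a scalar $\alpha\ge1$, replacing $\mathbf{W}$ by $\alpha\mathbf{W}$ multiplies every received amplitude $\boldsymbol{\gamma}_n^{\top}\mathbf{w}_k$ by $\alpha$, while the noise variances $\sigma_n^2$ stay fixed. Writing $S_n=|\boldsymbol{\gamma}_n^{\top}\mathbf{w}_n|^2$ and $I_n=\sum_{k\neq n}|\boldsymbol{\gamma}_n^{\top}\mathbf{w}_k|^2$, we get
\begin{equation*}
\mathrm{SINR}_n(\alpha\mathbf{W})=\frac{\alpha^2 S_n}{\alpha^2 I_n+\sigma_n^2}=\frac{S_n}{I_n+\sigma_n^2/\alpha^2}.
\end{equation*}
The first step is therefore to observe that the right-hand side is nondecreasing in $\alpha^2$, since the denominator decreases (weakly) while the numerator is fixed. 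It is strictly increasing in $\alpha$ whenever $S_n>0$, because $\sigma_n^2>0$ makes $\sigma_n^2/\alpha^2$ strictly decrease for $\alpha>1$.

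The second step is to pass to the sum rate. Since $\log_2(1+x)$ is strictly increasing, each summand of \eqref{eq:sumrate_def} is nondecreasing in $\alpha$. Hence $\mathrm{SR}(\alpha\mathbf{W})\ge\mathrm{SR}(\mathbf{W})$, and the inequality is strict as soon as one user has $S_n>0$ and $\alpha>1$. For $\alpha=1$ the claim is trivial, so strictness should be read for $\alpha>1$; I would note this explicitly. The feasibility condition $\alpha^2\|\mathbf{W}\|_{\mathrm F}^2\le P$ only ensures that the scaled point satisfies \eqref{eq:power_constraint_modal}.

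The third step gives the consequences. Take any feasible $\mathbf{W}\neq\mathbf{0}$ and set $\alpha=\sqrt{P}/\|\mathbf{W}\|_{\mathrm F}\ge1$. Then the projection $\sqrt{P}\,\mathbf{W}/\|\mathbf{W}\|_{\mathrm F}$ is feasible and, by step two, does not decrease the sum rate. Every sub-budget point is therefore weakly dominated by a point on the sphere, so the supremum of \eqref{eq:sumrate_obj_modal} over the ball equals the supremum over the sphere $\|\mathbf{W}\|_{\mathrm F}^2=P$.

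Attainment follows from continuity of $\mathrm{SR}$ on the compact sphere. For the joint problem, $\boldsymbol{\Psi}$ ranges over the compact $\mathcal{U}(L)$, and $\boldsymbol{\gamma}_n$ depends continuously on $\boldsymbol{\Psi}$, so a maximizer exists and can be projected onto the sphere. The degenerate case $\mathbf{W}=\mathbf{0}$, where the projection is undefined, has zero rate and is handled separately. There is no real obstacle here; the only points needing care are $\sigma_n^2>0$ for strictness and the compactness/continuity argument for the phrase ``is attained''.
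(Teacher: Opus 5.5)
Your proposal is correct and follows the paper's argument. Both reduce the claim to the observation that under a common scaling $\mathrm{SINR}_n(\alpha)=\alpha^2S_n/(\alpha^2I_n+\sigma_n^2)$ is nondecreasing in $\alpha^2$, because signal and interference scale together while the noise does not. Your extra remarks fill in details that the paper's one-line proof leaves implicit: strictness holds only for $\alpha>1$ with $\sigma_n^2>0$, and attainment on the sphere follows from compactness and continuity.
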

\begin{proof}
Writing $S_n=|\boldsymbol{\gamma}_n^{\top}\mathbf{w}_n|^2$ and $I_n=\sum_{k\neq n}|\boldsymbol{\gamma}_n^{\top}\mathbf{w}_k|^2$, scaling gives $\mathrm{SINR}_n(\alpha)=\alpha^2 S_n/(\alpha^2 I_n+\sigma_n^2)$, whose derivative in $\alpha^2$ is $S_n\sigma_n^2/(\alpha^2 I_n+\sigma_n^2)^2\ge0$: signal and interference scale together while the noise does not.
\end{proof}
Accordingly, the beamforming block consists of the exact update \eqref{eq:W_closedform_deriv} followed by the radial projection of Lemma~\ref{lem:radial}, and the receive scalars and weights are refreshed afterwards so that the \ac{WMMSE} bound is tight at the projected point.

\subsection{Modal Matrix Update on the Unitary Manifold}
\label{sec:modal_update}
Fixing $\{\mathbf{w}_k\}$, $\{v_n\}$, and $\{q_n\}$, the dependence of \eqref{eq:wmmse_equiv} on $\boldsymbol{\Psi}$ enters through \eqref{eq:g_eff_def}, where only the first term of $\boldsymbol{\gamma}_n$ varies with $\boldsymbol{\Psi}$ while the offset $\boldsymbol{\gamma}_n^{0}$ contributes the constant $d_{n,k}=(\boldsymbol{\gamma}_n^{0})^{\top}\mathbf{w}_k$ to each gain $h_{n,k}=\boldsymbol{\beta}_n^{\top}\boldsymbol{\Psi}\boldsymbol{\alpha}_k+d_{n,k}$. Dropping $\boldsymbol{\Psi}$-independent terms, the objective is the smooth function
\begin{align}
\mathcal{L}(\boldsymbol{\Psi})
&=\sum_{n=1}^{K}q_n\Bigl(|v_n|^2\sum_{k=1}^{K}\bigl|\boldsymbol{\beta}_n^{\top}\boldsymbol{\Psi}\boldsymbol{\alpha}_k+d_{n,k}\bigr|^2\nonumber\\
&\qquad\qquad-2\Re\{v_n^{*}(\boldsymbol{\beta}_n^{\top}\boldsymbol{\Psi}\boldsymbol{\alpha}_n+d_{n,n})\}\Bigr)\nonumber\\
&=\mathrm{Tr}\bigl(\boldsymbol{\Psi}^{\mathsf{H}}\mathbf{Y}\boldsymbol{\Psi}\mathbf{Z}\bigr)-2\Re\bigl\{\mathrm{Tr}\bigl(\boldsymbol{\Psi}^{\mathsf{H}}\mathbf{X}\bigr)\bigr\},
\label{eq:L_compact_hermitian}
\end{align}
where the constant $d_{n,k}$ enters only the linear block, whose cross terms with $\boldsymbol{\beta}_n^{\top}\boldsymbol{\Psi}\boldsymbol{\alpha}_k$ combine with the desired-signal term into
\begin{align}
\mathbf{X}&=\sum_{n=1}^{K}q_n\,\boldsymbol{\beta}_n^{*}\Bigl(v_n\,\boldsymbol{\alpha}_n^{\mathsf{H}}-|v_n|^2\sum_{k=1}^{K}d_{n,k}\,\boldsymbol{\alpha}_k^{\mathsf{H}}\Bigr)\in\mathbb{C}^{L\times L}, \label{eq:X_modal_hermitian}\\
\mathbf{Y}&=\sum_{n=1}^{K}q_n|v_n|^2\,\boldsymbol{\beta}_n^{*}\boldsymbol{\beta}_n^{\top}\in\mathbb{C}^{L\times L}, \label{eq:Y_modal_hermitian}\\
\mathbf{Z}&=\sum_{k=1}^{K}\boldsymbol{\alpha}_k\boldsymbol{\alpha}_k^{\mathsf{H}}\in\mathbb{C}^{L\times L},\label{eq:Z_modal_hermitian}
\end{align}
with $\mathbf{Y},\mathbf{Z}$ Hermitian and positive semidefinite. Setting $\boldsymbol{\gamma}_n^{0}=\mathbf{0}$ (lossy model) reduces $\mathbf{X}$ to $\sum_n q_n v_n\boldsymbol{\beta}_n^{*}\boldsymbol{\alpha}_n^{\mathsf{H}}$. Crucially, all three blocks are of rank at most $K\ll L$ and admit the factored forms
\begin{equation}
\mathbf{Y}=\mathbf{Y}_{\mathrm f}\mathbf{Y}_{\mathrm f}^{\mathsf H},\quad
\mathbf{Z}=\mathbf{A}_{\mathrm f}\mathbf{A}_{\mathrm f}^{\mathsf H},\quad
\mathbf{X}=\mathbf{X}_{\mathrm f}\mathbf{A}_{\mathrm f}^{\mathsf H},
\label{eq:lowrank_factors}
\end{equation}
with $\mathbf{Y}_{\mathrm f},\mathbf{A}_{\mathrm f},\mathbf{X}_{\mathrm f}\in\mathbb{C}^{L\times K}$ ($\mathbf{A}_{\mathrm f}=[\boldsymbol{\alpha}_1,\dots,\boldsymbol{\alpha}_K]$, $\mathbf{Y}_{\mathrm f}$ the $\sqrt{q_n}|v_n|$-scaled $\boldsymbol{\beta}_n^{*}$, and $\mathbf{X}_{\mathrm f}$ collecting both linear-block contributions). This structure is exploited below to make the line search essentially free. The Euclidean (conjugate) gradient of \eqref{eq:L_compact_hermitian} is
\begin{align}
\boldsymbol{\Omega}=\nabla_{\boldsymbol{\Psi}}\mathcal{L}&=\mathbf{Y}\boldsymbol{\Psi}\mathbf{Z}-\mathbf{X}\nonumber\\
&=\bigl(\mathbf{Y}_{\mathrm f}(\mathbf{Y}_{\mathrm f}^{\mathsf H}\boldsymbol{\Psi}\mathbf{A}_{\mathrm f})-\mathbf{X}_{\mathrm f}\bigr)\mathbf{A}_{\mathrm f}^{\mathsf H},
\label{eq:eucl_grad}
\end{align}
computable in $O(L^{2}K)$. Because \eqref{eq:psi_unitary_modal} confines $\boldsymbol{\Psi}$ to the unitary group $\mathcal{U}(L)$, we follow the unitary steepest-descent method of \cite{abrudan2008steepest}. The Riemannian gradient is the skew-Hermitian matrix
\begin{equation}
\boldsymbol{\Xi}=\boldsymbol{\Omega}\boldsymbol{\Psi}^{\mathsf{H}}-\boldsymbol{\Psi}\boldsymbol{\Omega}^{\mathsf{H}},\qquad \boldsymbol{\Xi}^{\mathsf{H}}=-\boldsymbol{\Xi},
\label{eq:riem_grad}
\end{equation}
and the update moves along the corresponding geodesic,
\begin{equation}
\boldsymbol{\Psi}^{(i+1)}=\exp\!\bigl(-\eta\,\boldsymbol{\Xi}\bigr)\,\boldsymbol{\Psi}^{(i)} .
\label{eq:phi_update}
\end{equation}
Since $\boldsymbol{\Xi}$ is skew-Hermitian, $\exp(-\eta\boldsymbol{\Xi})$ is unitary and \eqref{eq:phi_update} keeps $\boldsymbol{\Psi}^{(i+1)}$ on $\mathcal{U}(L)$ exactly, for any step $\eta>0$. The step $\eta$ is chosen by Armijo backtracking, accepting the first $\eta$ with $\mathcal{L}(\boldsymbol{\Psi}^{(i+1)})\le\mathcal{L}(\boldsymbol{\Psi}^{(i)})-c\,\eta\,\|\boldsymbol{\Xi}\|_{\mathrm F}^{2}$, which ensures a monotone, sufficient decrease of $\mathcal{L}$.

The line search admits a fast implementation. Writing $j\boldsymbol{\Xi}=\mathbf{V}\mathrm{diag}(\mathbf{d})\mathbf{V}^{\mathsf H}$ (one Hermitian eigendecomposition per outer iteration), the geodesic is $\boldsymbol{\Psi}(\eta)=\mathbf{V}\mathrm{diag}(e^{j\eta\mathbf{d}})\mathbf{V}^{\mathsf H}\boldsymbol{\Psi}^{(i)}$, and by \eqref{eq:lowrank_factors}
\begin{align}
\mathcal{L}\bigl(\boldsymbol{\Psi}(\eta)\bigr)
&=\bigl\|\mathbf{P}\,\mathrm{diag}(e^{j\eta\mathbf{d}})\,\mathbf{Q}\bigr\|_{\mathrm F}^{2}-2\Re\Bigl\{\sum_{m=1}^{L}e^{-j\eta d_m}\,\omega_m\Bigr\},
\label{eq:cheap_linesearch}
\end{align}
where $\mathbf{P}=\mathbf{Y}_{\mathrm f}^{\mathsf H}\mathbf{V}\in\mathbb{C}^{K\times L}$, $\mathbf{Q}=\mathbf{V}^{\mathsf H}\boldsymbol{\Psi}^{(i)}\mathbf{A}_{\mathrm f}\in\mathbb{C}^{L\times K}$, and $\boldsymbol{\omega}$ collects the diagonal of $(\mathbf{V}^{\mathsf H}\mathbf{X}_{\mathrm f})(\mathbf{A}_{\mathrm f}^{\mathsf H}(\boldsymbol{\Psi}^{(i)})^{\mathsf H}\mathbf{V})$, all precomputed once per outer iteration. Each backtracking trial then costs only $O(K^{2}L)$, instead of the $O(L^{3})$ of a naive evaluation, and the full matrix $\boldsymbol{\Psi}(\eta)$ is formed a single time, after acceptance.

\subsection{Overall Algorithm and Convergence}
Algorithm~\ref{alg:sumrate_modal} alternates the two blocks. For fixed $\boldsymbol{\Psi}$ it forms the effective channels $\{\boldsymbol{\gamma}_n\}$ via \eqref{eq:g_eff_def}, updates $\{v_n,q_n\}$ in closed form, computes $\{\mathbf{w}_k\}$ via \eqref{eq:W_closedform_deriv}, projects onto the power sphere, and refreshes $\{v_n,q_n\}$. For fixed $\{\mathbf{w}_k\}$ it forms the factors \eqref{eq:lowrank_factors} and takes a backtracked Riemannian step via \eqref{eq:eucl_grad} to \eqref{eq:cheap_linesearch}. This composition is monotone not only in the surrogate but in the original objective.

\begin{proposition}[Sum-rate monotonicity]
\label{prop:monotone}
Each full iteration of Algorithm~\ref{alg:sumrate_modal} does not decrease the sum rate \eqref{eq:sumrate_def}. Being bounded above on the compact feasible set, the sum-rate sequence converges.
\end{proposition}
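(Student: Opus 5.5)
I will prove the monotonicity by a chain of inequalities across the sub-steps of one full iteration, using the WMMSE link as the bridge between the surrogate and the true objective. Write $\mathrm{SR}(\mathbf{W},\boldsymbol{\Psi})$ for the sum rate in nats, equal to $\ln 2$ times \eqref{eq:sumrate_def}. The key identity is the one derived above: for fixed $(\mathbf{W},\boldsymbol{\Psi})$,
\[
\min_{\{v_n,q_n\}}\mathcal{L}(\mathbf{W},\boldsymbol{\Psi},\{v_n,q_n\})=K-\mathrm{SR}(\mathbf{W},\boldsymbol{\Psi}),
\]
with the minimum attained at $v_n^{\star}$ from \eqref{eq:mmse_receiver} and $q_n^{\star}=1+\mathrm{SINR}_n$. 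It follows that $\mathcal{L}\ge K-\mathrm{SR}$ for arbitrary auxiliaries, with equality after every refresh.

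\textbf{Beamforming block.} Start from $(\mathbf{W}^{(i)},\boldsymbol{\Psi}^{(i)})$ and refresh $\{v_n,q_n\}$, so that $\mathcal{L}=K-\mathrm{SR}(\mathbf{W}^{(i)},\boldsymbol{\Psi}^{(i)})$. Since $\mathbf{W}^{(i)}$ is feasible for the power budget and \eqref{eq:W_closedform_deriv} is the exact minimizer of the convex quadratic \eqref{eq:LW_quadratic} over that budget (a KKT argument with $\mu$ as the multiplier), $\mathcal{L}$ does not increase. Evaluating at $\mathbf{W}^{\star}$ then gives $K-\mathrm{SR}(\mathbf{W}^{\star})\le\mathcal{L}(\mathbf{W}^{\star})\le K-\mathrm{SR}(\mathbf{W}^{(i)})$. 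Lemma~\ref{lem:radial} with $\alpha=\sqrt{P}/\|\mathbf{W}^{\star}\|_{\mathrm F}\ge1$ shows that the radial projection does not decrease $\mathrm{SR}$. The degenerate case $\mathbf{W}^{\star}=\mathbf{0}$ must be handled separately: either skip the projection, or note that it cannot occur once $\mathbf{B}\neq\mathbf{0}$. Refreshing $\{v_n,q_n\}$ again restores tightness at the projected point $\mathbf{W}^{(i+1)}$.

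\textbf{Modal block.} With $\mathbf{W}^{(i+1)}$ and the refreshed auxiliaries fixed, \eqref{eq:L_compact_hermitian} is exactly $\mathcal{L}$ up to $\boldsymbol{\Psi}$-independent constants, and $\mathcal{L}$ is tight at $\boldsymbol{\Psi}^{(i)}$. The geodesic update \eqref{eq:phi_update} stays on $\mathcal{U}(L)$, so the lossless completion \eqref{eq:theta_lossless} keeps the surface admissible. The Armijo rule guarantees $\mathcal{L}(\boldsymbol{\Psi}^{(i+1)})\le\mathcal{L}(\boldsymbol{\Psi}^{(i)})$. If $\boldsymbol{\Xi}=\mathbf{0}$, take $\eta=0$ and the inequality is trivial.

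The main obstacle I expect is showing that backtracking terminates with an accepted $\eta>0$. I would argue this from smoothness of $\eta\mapsto\mathcal{L}(\exp(-\eta\boldsymbol{\Xi})\boldsymbol{\Psi})$. Its derivative at $0$ is $-\Re\,\mathrm{tr}(\boldsymbol{\Omega}^{\mathsf H}\boldsymbol{\Xi}\boldsymbol{\Psi})$, which equals $-\tfrac12\|\boldsymbol{\Xi}\|_{\mathrm F}^{2}$ up to the convention-dependent factor absorbed into $c$. Hence the sufficient-decrease test holds for all small $\eta$ whenever $c$ lies below that factor, and a fallback of $\eta=0$ preserves monotonicity in any case. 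Applying the bound $\mathcal{L}\ge K-\mathrm{SR}$ then yields $\mathrm{SR}(\mathbf{W}^{(i+1)},\boldsymbol{\Psi}^{(i+1)})\ge\mathrm{SR}(\mathbf{W}^{(i+1)},\boldsymbol{\Psi}^{(i)})\ge\mathrm{SR}(\mathbf{W}^{(i)},\boldsymbol{\Psi}^{(i)})$.

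\textbf{Convergence.} The feasible set, namely $\mathcal{U}(L)$ times the power ball, is compact, and $\mathrm{SR}$ is continuous on it. Continuity holds because $\sigma_n^2>0$ keeps every $\mathrm{SINR}_n$ finite. The sum rate is therefore bounded above. A nondecreasing sequence that is bounded above converges by the monotone convergence theorem.
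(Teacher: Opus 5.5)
Your proof is correct and follows essentially the same route as the paper. Both arguments rest on the same steps: the WMMSE surrogate is tight after each refresh of $\{v_n,q_n\}$, and the bound $\mathcal{L}\ge K-\ln 2\cdot\mathrm{SR}$ holds for arbitrary auxiliaries, so the exact constrained beamformer minimizer and the Armijo-accepted geodesic step are non-increasing moves, Lemma~\ref{lem:radial} handles the radial projection, and monotone boundedness gives convergence. Your treatment of the degenerate case $\mathbf{W}^{\star}=\mathbf{0}$ and of backtracking termination is a refinement the paper leaves implicit. One correction there: since $\boldsymbol{\Omega}$ is the conjugate gradient, the slope of $\eta\mapsto\mathcal{L}(\exp(-\eta\boldsymbol{\Xi})\boldsymbol{\Psi})$ at $0$ is exactly $-2\Re\,\mathrm{tr}(\boldsymbol{\Omega}^{\mathsf H}\boldsymbol{\Xi}\boldsymbol{\Psi})=-\|\boldsymbol{\Xi}\|_{\mathrm F}^{2}$, not $-\tfrac12\|\boldsymbol{\Xi}\|_{\mathrm F}^{2}$, so the sufficient-decrease test is met for small $\eta$ for any $c<1$.
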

\begin{proof}
At the closed-form optima \eqref{eq:mmse_receiver} and $q_n^{\star}=1+\mathrm{SINR}_n$, the surrogate is tight: $\mathcal{L}=K-\ln2\cdot\mathrm{SR}$. For arbitrary $\{v_n,q_n\}$, $\mathcal{L}\ge K-\ln2\cdot\mathrm{SR}$ since the optima minimize each $\mathrm{WMSE}_n$. Hence any step that does not increase $\mathcal{L}$ at fixed $\{v_n,q_n\}$ satisfies $K-\ln2\cdot\mathrm{SR}_{\mathrm{new}}\le\mathcal{L}_{\mathrm{new}}\le\mathcal{L}_{\mathrm{old}}=K-\ln2\cdot\mathrm{SR}_{\mathrm{old}}$, i.e., does not decrease the rate. The beamformer update \eqref{eq:W_closedform_deriv} is the exact constrained minimizer of \eqref{eq:LW_quadratic}, and the Armijo-accepted manifold step never increases \eqref{eq:L_compact_hermitian}. Both are of this form, the latter after the post-projection refresh of $\{v_n,q_n\}$ restores tightness. The radial projection increases the rate directly by Lemma~\ref{lem:radial}. Every step in the composition is therefore rate-nondecreasing.
\end{proof}

\begin{algorithm}[t]
\caption{AO Sum-Rate Maximization for Modal XL-BD-RIS}
\label{alg:sumrate_modal}
\begin{algorithmic}[1]
\STATE \textbf{Input:} $\mathbf{G}$, $\{\mathbf{f}_n\}$, $\boldsymbol{\Phi}$, power $P$, tolerances $\varepsilon,\xi$, backtracking factor $\tau\in(0,1)$, Armijo constant $c$
\STATE \textbf{Initialize:} $\boldsymbol{\Psi}\in\mathcal{U}(L)$, $\{\mathbf{w}_k\}$ feasible; precompute $\boldsymbol{\beta}_n\gets\boldsymbol{\Phi}^{\top}\mathbf{f}_n$, $\mathbf{G}^{\top}\boldsymbol{\Phi}^{*}$, $\boldsymbol{\gamma}_n^{0}\gets\mathbf{G}^{\top}\mathbf{f}_n-(\mathbf{G}^{\top}\boldsymbol{\Phi}^{*})\boldsymbol{\beta}_n$
\REPEAT
    \STATE // WMMSE beamforming block
    \STATE $\boldsymbol{\gamma}_n\gets\mathbf{G}^{\top}\boldsymbol{\Phi}^{*}\boldsymbol{\Psi}^{\top}\boldsymbol{\beta}_n+\boldsymbol{\gamma}_n^{0}$; $\ v_n$ via \eqref{eq:mmse_receiver}, $q_n=1/\mathrm{MSE}_n$
    \STATE $\mathbf{A}\gets\sum_n q_n|v_n|^2\boldsymbol{\gamma}_n^{*}\boldsymbol{\gamma}_n^{\top}$, $\ \mathbf{B}\gets\sum_n q_n v_n\boldsymbol{\gamma}_n^{*}\boldsymbol{\epsilon}_n^{\mathsf{H}}$
    \STATE $\mathbf{W}\gets(\mathbf{A}+\mu\mathbf{I})^{-1}\mathbf{B}$, $\mu\ge0$ by bisection s.t. \eqref{eq:power_constraint_modal}
    \STATE $\mathbf{W}\gets\sqrt{P}\,\mathbf{W}/\|\mathbf{W}\|_{\mathrm F}$ (radial projection, Lemma~\ref{lem:radial}); refresh $v_n,q_n$
    \STATE // Modal RIS block
    \STATE $\boldsymbol{\alpha}_k\gets\boldsymbol{\Phi}^{\mathsf H}\mathbf{G}\mathbf{w}_k\ \forall k$, $\ d_{n,k}\gets(\boldsymbol{\gamma}_n^{0})^{\top}\mathbf{w}_k$; form $\mathbf{Y}_{\mathrm f},\mathbf{A}_{\mathrm f},\mathbf{X}_{\mathrm f}$ via \eqref{eq:lowrank_factors}
    \STATE $\boldsymbol{\Omega}\gets(\mathbf{Y}_{\mathrm f}(\mathbf{Y}_{\mathrm f}^{\mathsf H}\boldsymbol{\Psi}\mathbf{A}_{\mathrm f})-\mathbf{X}_{\mathrm f})\mathbf{A}_{\mathrm f}^{\mathsf H}$; $\ \boldsymbol{\Xi}\gets\boldsymbol{\Omega}\boldsymbol{\Psi}^{\mathsf{H}}-\boldsymbol{\Psi}\boldsymbol{\Omega}^{\mathsf{H}}$
    \STATE Eigendecompose $j\boldsymbol{\Xi}=\mathbf{V}\mathrm{diag}(\mathbf{d})\mathbf{V}^{\mathsf H}$; precompute $\mathbf{P},\mathbf{Q},\boldsymbol{\omega}$ of \eqref{eq:cheap_linesearch}; $\eta\gets\eta_0$
    \REPEAT
        \STATE evaluate $\mathcal{L}(\boldsymbol{\Psi}(\eta))$ via \eqref{eq:cheap_linesearch}
        \IF{$\mathcal{L}(\boldsymbol{\Psi}(\eta))\le\mathcal{L}(\boldsymbol{\Psi})-c\,\eta\,\|\boldsymbol{\Xi}\|_{\mathrm F}^2$}
            \STATE $\boldsymbol{\Psi}\gets\mathbf{V}\mathrm{diag}(e^{j\eta\mathbf{d}})\mathbf{V}^{\mathsf H}\boldsymbol{\Psi}$; \textbf{break}
        \ELSE
            \STATE $\eta\gets\tau\,\eta$
        \ENDIF
    \UNTIL{$\eta<\varepsilon$}
\UNTIL{sum-rate change $<\xi$}
\STATE \textbf{Return:} $\{\mathbf{w}_k\}$, $\boldsymbol{\Psi}$
\end{algorithmic}
\end{algorithm}

\subsection{Computational Complexity}
The basis-dependent products $\mathbf{G}^{\top}\boldsymbol{\Phi}^{*}$, $\boldsymbol{\Phi}^{\mathsf H}\mathbf{G}$, $\{\boldsymbol{\beta}_n\}$, and $\{\boldsymbol{\gamma}_n^{0}\}$ cost $O(NML)$ but are computed once, since $\boldsymbol{\Phi}$ is fixed throughout the AO. Per outer iteration, the WMMSE block forms $\{\boldsymbol{\gamma}_n\}$ in $O(KL^2+K^2L)$, builds $\mathbf{A},\mathbf{B}$ in $O(N^2K)$, and solves \eqref{eq:W_closedform_deriv} through one $N\times N$ eigendecomposition plus $O(N)$ per bisection step. The modal block assembles the factors \eqref{eq:lowrank_factors} in $O(LK^2)$, computes the gradient and skew-Hermitian generator in $O(L^{2}K)$, performs one $L\times L$ Hermitian eigendecomposition in $O(L^3)$, runs $I_\eta$ backtracking trials at $O(K^{2}L)$ each, and reconstructs the accepted $\boldsymbol{\Psi}$ in $O(L^3)$. Aggregating over $I_{\mathrm{out}}$ outer iterations gives
\begin{equation}
O\!\bigl(NML+I_{\mathrm{out}}\,(N^3+L^3+I_\eta K^2L)\bigr),
\end{equation}
i.e., the line search no longer multiplies the cubic term. For $L=2\rho\ll M$, guaranteed by Proposition~\ref{prop:halmos} under the geometric model, this is far below the $O(M^3)$ per-iteration cost of optimizing the surface directly in the element domain, the comparison holding for $L\ll M$ and excluding the one-time compression of Sec.~\ref{sec:construction}. The full architecture ladder is quantified in Sec.~\ref{sec:complexity}.

\subsection{Newton-Accelerated Surface Updates}
\label{sec:newton_modal}
The single geodesic step of Sec.~\ref{sec:modal_update} is a first-order update, and its practical convergence is dictated by the conditioning of the quadratic surrogate \eqref{eq:L_compact_hermitian}, whose curvature spans the squared singular-value spread of the atom matrix. In near-field geometry this spread is large, so gradient steps contract slowly along the weakly coupled modes. We therefore solve each surface subproblem to higher accuracy before returning to the beamforming block, replacing the single step by a small number of Riemannian Newton iterations, each Newton system being solved inexactly by truncated conjugate gradients while the subproblem itself is driven to tolerance. In the tangent coordinates of \eqref{eq:phi_update}, i.e., skew-Hermitian directions $\boldsymbol{\Delta}$, the Hessian of the pullback of \eqref{eq:L_compact_hermitian} along the geodesic acts as
\begin{align}
\mathcal{H}[\boldsymbol{\Delta}]&=\mathbf{Y}\boldsymbol{\Delta}\mathbf{B}+\mathbf{B}\boldsymbol{\Delta}\mathbf{Y}-\bigl(\mathbf{C}\boldsymbol{\Delta}+\boldsymbol{\Delta}\mathbf{C}^{\mathsf H}\bigr),\nonumber\\
\mathbf{B}&=\boldsymbol{\Psi}\mathbf{Z}\boldsymbol{\Psi}^{\mathsf H},\qquad
\mathbf{C}=\boldsymbol{\Omega}\boldsymbol{\Psi}^{\mathsf H},
\label{eq:hessian_action}
\end{align}
so that the quadratic and linear blocks of the surrogate enter through the two matrices $\mathbf{B}$ and $\mathbf{C}$ alone, and the Riemannian gradient \eqref{eq:riem_grad} is recovered as their skew part, $\boldsymbol{\Xi}=\mathbf{C}-\mathbf{C}^{\mathsf H}$. Both matrices have rank at most $K$ by \eqref{eq:lowrank_factors}, since $\mathbf{B}=(\boldsymbol{\Psi}\mathbf{A}_{\mathrm f})(\boldsymbol{\Psi}\mathbf{A}_{\mathrm f})^{\mathsf H}$ and $\mathbf{C}=\bigl(\mathbf{Y}_{\mathrm f}(\mathbf{Y}_{\mathrm f}^{\mathsf H}\boldsymbol{\Psi}\mathbf{A}_{\mathrm f})-\mathbf{X}_{\mathrm f}\bigr)(\boldsymbol{\Psi}\mathbf{A}_{\mathrm f})^{\mathsf H}$, and one product \eqref{eq:hessian_action} costs $O(L^{2}K)$. The Newton direction is obtained by truncated conjugate gradients on \eqref{eq:hessian_action}, mapped to the manifold by the geodesic exponential \eqref{eq:phi_update}, and accepted under the same Armijo test, so accepted steps never increase the surrogate and Proposition~\ref{prop:monotone} applies verbatim. The method is globally monotone through the safeguard and inherits the fast local convergence typical of Newton-type schemes on matrix manifolds. Its cost adds a conjugate-gradient budget of at most a small multiple of $L$ products per Newton iteration, so the per-iteration order $O(L^{3}+K^{2}L)$ of Sec.~\ref{sec:complexity} is unchanged, the $L\times L$ eigendecomposition being cubic already. What changes is the iteration count, which drops by more than an order of magnitude.

\section{Element-Domain Architectures: Group-Connected, Diagonal, Fully-Connected}
\label{sec:architectures}
The modal architecture of Secs.~\ref{sec:modal} to \ref{sec:sumrate} is one point in a wider design space. This section adapts the same model and the same optimization machinery to the conventional element-domain family, in which the reconfigurable impedance network is specified directly by the sparsity pattern of $\boldsymbol{\Theta}$, and places all architectures on a common complexity-performance ladder.

\subsection{The Group-Connected Family}
A group-connected surface partitions the $M$ elements into $M/N_g$ groups of size $N_g$ and allows reconfigurable coupling only within groups,
\begin{equation}
\boldsymbol{\Theta}=\mathrm{blkdiag}\bigl(\boldsymbol{\Theta}_1,\dots,\boldsymbol{\Theta}_{M/N_g}\bigr),\qquad \boldsymbol{\Theta}_b\in\mathcal{U}(N_g),
\label{eq:group_def}
\end{equation}
with contiguous index groups unless stated otherwise. The surface carries $M N_g$ complex entries, of which $M(N_g+1)/2$ are free under reciprocity. The family interpolates between the two classical extremes: $N_g=1$ is the single-connected (diagonal) \ac{RIS}, $\boldsymbol{\Theta}=\mathrm{diag}(e^{j\phi_1},\dots,e^{j\phi_M})$, and $N_g=M$ is the fully-connected BD-RIS, $\boldsymbol{\Theta}\in\mathcal{U}(M)$. Because every block-diagonal pattern of size $N_g$ contains every pattern whose size divides $N_g$, the feasible sets are nested along divisor chains, and the achievable rates are monotone in $N_g$ at fixed $M$.

\subsection{Optimization by the Same Alternating Scheme}
The \ac{WMMSE} outer loop of Sec.~\ref{sec:sumrate} is architecture-agnostic: the beamforming block, the receiver and weight updates, Lemma~\ref{lem:radial}, and Proposition~\ref{prop:monotone} depend on the surface only through the effective channels, and apply verbatim with the modal update replaced by an element-domain surface update under the constraint \eqref{eq:group_def}. Writing the surrogate as a function of $\boldsymbol{\Theta}$ (the element-domain image of \eqref{eq:L_compact_hermitian}, with $\boldsymbol{\Phi}=\mathbf{I}_M$, $\mathbf{Y},\mathbf{Z},\mathbf{X}$ built from $\{\mathbf{f}_n^{*}\}$ and $\{\mathbf{e}_k\}$), two update styles cover the whole family.

\subsubsection{Blockwise Riemannian step ($1<N_g\le M$)}
The Euclidean gradient $\boldsymbol{\Omega}_{\boldsymbol{\Theta}}=\mathbf{Y}\boldsymbol{\Theta}\mathbf{Z}-\mathbf{X}$ is projected onto the constraint pattern by retaining its diagonal blocks. Each block then takes the skew-Hermitian geodesic step of Sec.~\ref{sec:modal_update} on $\mathcal{U}(N_g)$, with a shared Armijo backtracking so that the accepted composite step is monotone. The correspondence with the modal update is exact, each block playing the role of $\boldsymbol{\Psi}$ on its own subgroup, and the step is expanded to second order precisely as in Sec.~\ref{sec:newton_modal}. The blockwise Hessian action has the form \eqref{eq:hessian_action} with the block-diagonal projection applied to each product, and it couples the blocks only through $K\times K$ mixtures of the per-block factors, so one product costs $O\bigl(M(N_g+K)K\bigr)$, the order of the gradient itself, with the conjugate-gradient budget scaled to the block size. Proposition~\ref{prop:monotone} applies unchanged, the leading-order ladder \eqref{eq:complexity_ladder} is unaffected, and at $N_g=M$ the construction yields the Newton-accelerated fully-connected anchor used throughout as the performance reference.

\subsubsection{Exact coordinate descent ($N_g=1$)}
For the diagonal surface the surrogate decouples across elements. With $\boldsymbol{\Theta}=\mathrm{diag}(e^{j\boldsymbol{\phi}})$, the dependence of the quadratic surrogate on a single phase $\phi_m$, all others fixed, is of the form
\begin{equation}
\mathcal{L}(\phi_m)=a_m+2\,\Re\bigl\{c_m\,e^{j\phi_m}\bigr\},\qquad
\phi_m^{\star}=\pi-\arg(c_m),
\label{eq:cd_phase}
\end{equation}
where $c_m$ collects the coupling of element $m$ to the residual formed by all other elements. Each coordinate update is therefore exact and closed-form, the sweep is rate-monotone after the receiver/weight refresh, and no manifold machinery is required. This solver is not merely a convenience: in every configuration tested it dominates the generic $N_g=1$ manifold treatment in both speed and attained rate, and its terminal solutions serve as certified feasible points (hence lower bounds) for every $N_g\ge1$ by the nesting above.

\subsection{The Complexity-Performance Ladder}
\label{sec:complexity}
Per outer iteration, with $N$ \ac{BS} antennas, $K$ users, and $\rho=r_G+K$ the active-subspace dimension of \eqref{eq:active_subspace}:
\begin{align}
C_{\mathrm{diag}}&=O\bigl(MK(N+K)\bigr) &&\text{(exact CD sweep)},\nonumber\\
C_{\mathrm{group}}&=O\bigl(M\,[NK+N_gK+N_g^2]\bigr) &&\text{(blockwise step)},\nonumber\\
C_{\mathrm{full}}&=O\bigl(M^3\bigr) &&\text{(dense-}\boldsymbol{\Theta}\text{ step)},\nonumber\\
C_{\mathrm{modal}}&=O\bigl(\rho^{3}\bigr)\ \ [\,+\,O(M\rho^{2})\ \text{once}\,] &&\text{(Sec.~\ref{sec:sumrate} at }L=2\rho\text{)}.
\label{eq:complexity_ladder}
\end{align}
The group cost interpolates its two endpoints ($C_{\mathrm{group}}\to C_{\mathrm{diag}}$ as $N_g\to1$, $\to O(M^3)$ as $N_g\to M$) and grows quadratically in the block size. Any method that touches a dense $\boldsymbol{\Theta}$ pays at least $O(M^2K)$ per iteration merely to read it. The modal architecture is the only entry whose per-iteration cost is independent of $M$: the aperture enters once, through the $O(M\rho^2)$ compression of Sec.~\ref{sec:construction}, and the iterative design then lives entirely in dimension $2\rho\le2(N{+}K)$.

\section{Numerical Results}
\label{sec:numerical}

We consider a square $24\times24$ XL-BD-RIS panel with $M=576$ elements on a $\lambda/2$ grid, of side $D\approx12\lambda$. The rate-versus-entries study of Fig.~\ref{fig:sumratevsentries} also includes a $16\times16$ panel placed in the same surroundings. All lengths are expressed in wavelengths, so the deployment is carrier-agnostic. At $f_c=28$~GHz, for instance, the panel is a $12.9$~cm tile serving indoor links ranging from tens of centimeters to a few meters, consistent with the office scenario of Fig.~\ref{fig:scenario}. Distances are given as fractions of the Rayleigh distance $d_{\mathrm F}=2D^{2}/\lambda$.

The \ac{BS} is a uniform $\lambda/2$ array of $N=8$ antennas. Unless stated otherwise it operates deep in the radiative near field of the panel, at $d_{\mathrm{BS}}=0.16\,d_{\mathrm F}$. The surface and the \ac{BS} are mounted $20\lambda$ above the ground. The $K=3$ users stand on the ground on the opposite side, within a $\pm45^{\circ}$ azimuth sector, at ranges drawn uniformly from $\approx[39\lambda,115\lambda]$. This window lies inside the near field of both panel sizes, so the two panels face the identical user deployment. The user positions are the only randomized quantity in the model. In this reference deployment the measured rank of $\mathbf{G}$ is $r_G=4$ and the user atoms are nearly orthogonal. The active-subspace dimension is therefore $\rho=r_G+K=7$, and the exactness budget of Proposition~\ref{prop:halmos} is $L=2\rho=14\ll M$ (for the $16\times16$ panel, $L=12$). Channels are normalized to unit mean cascaded gain and the \ac{SNR} is $0$~dB. This operating point is conservative for the proposed design, because the weak modes that only large budgets can exploit carry the least rate at low \ac{SNR}. Group-connected surfaces follow \eqref{eq:group_def} with contiguous groups along divisor chains of $M$. All surface subproblems are solved with the Newton-accelerated updates Sec.~\ref{sec:newton_modal} and their blockwise counterpart, and every reported point is the best over multiple independent initializations.

Alongside the element-domain family, we benchmark the modal design against a \ac{DFT} beamspace of the same dimension. Its basis collects the $L$ columns of the two-dimensional $M$-point \ac{DFT} matrix, the far-field beams of the panel, that capture the most energy of the stacked atom matrix $[\mathbf{G},\mathbf{F}^{*}]$, ranked once per deployment and nested in $L$. The beamspace simply replaces the modal basis $\boldsymbol{\Phi}$ in the compression and lossless completion of Sec.~III-C and is optimized by the identical algorithm.

\begin{figure}[t]
\centering
\includegraphics[width=0.8\columnwidth]{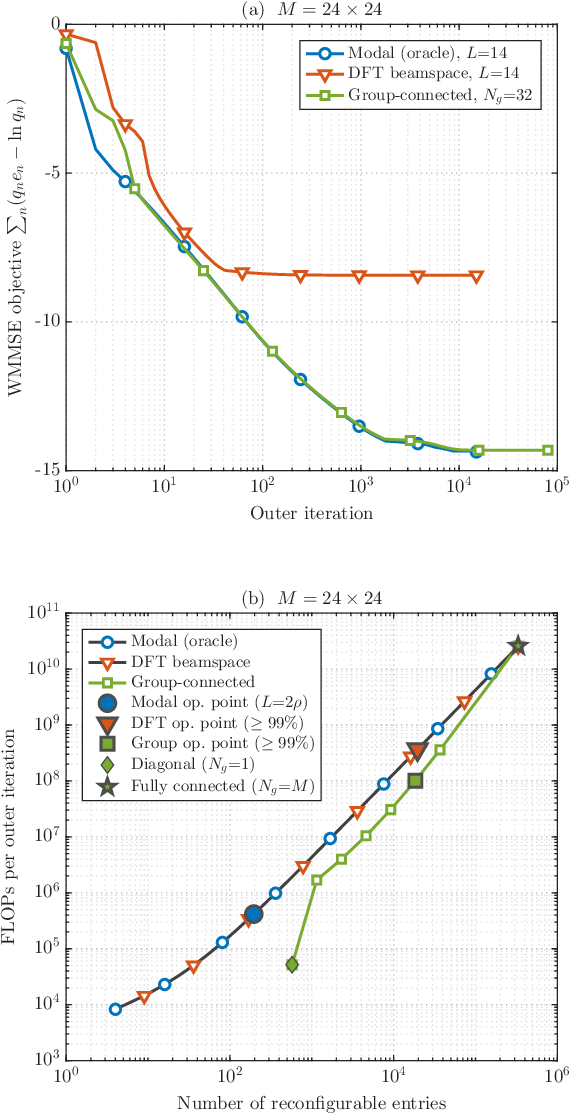}
\caption{Optimization cost at the reference deployment ($M=576$, $0$~dB). (a)~Convergence of the \ac{WMMSE} objective for the modal and \ac{DFT} beamspace designs at $L=2\rho=14$ and a group-connected surface with $N_g=32$. (b)~Leading-order \acp{FLOP} per outer iteration versus reconfigurable entries. Beamforming-stage costs common to all architectures are omitted. Filled markers are operating points: modal at $L=2\rho$ (attaining the fully-connected optimum, Prop.~\ref{prop:halmos}), beamspace and group-connected at their smallest computed budgets reaching $99\%$ of it.}
\label{fig:convcomplex}
\end{figure}
\begin{figure}[t]
\centering
\includegraphics[width=0.8\columnwidth]{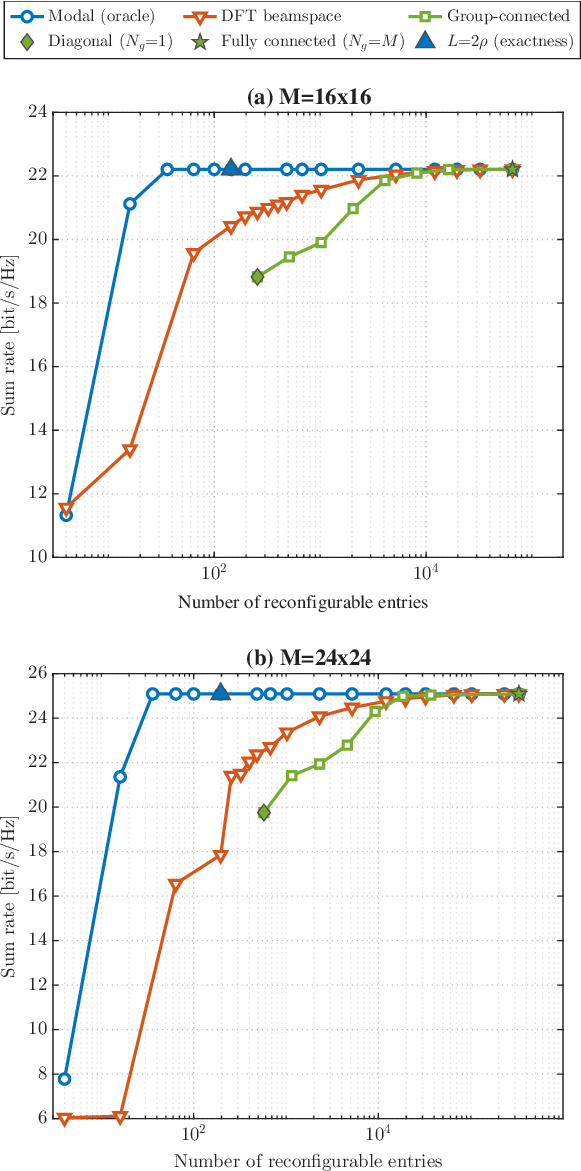}
\caption{Sum rate versus number of reconfigurable entries at $0$~dB for (a)~$M=16\times16$ and (b)~$M=24\times24$ under identical user deployments. Diamond and pentagram mark the diagonal ($N_g=1$) and fully-connected ($N_g=M$) endpoints of the group-connected family.}
\label{fig:sumratevsentries}
\end{figure}
\subsection{Convergence and Complexity}
\label{sec:num_conv}
Fig.~\ref{fig:convcomplex} examines the cost of optimization, deliberately split into its two factors: iterations to converge in panel~(a) and work per iteration in panel~(b). Total complexity is their product, and neither factor alone ranks the architectures. Panel~(a) shows cold-start traces of the \ac{WMMSE} objective $\sum_n(q_ne_n-\ln q_n)$, so the curves are sum-rate trajectories under a monotone transform. The modal design at $L=2\rho=14$ converges in roughly $2\times10^{4}$ outer iterations. The \ac{DFT} beamspace at the same $L$ converges an order of magnitude faster, but to a visibly worse objective. A mismatched basis produces an easier optimization landscape precisely because it cannot reach the better optima, a mechanism quantified in Sec.~\ref{sec:num_dbs}. The group-connected trace ($N_g=32$, cf.\ \eqref{eq:group_def}) is the slowest, as a direct consequence of its structure. Cross-block coordination occurs only through the \ac{WMMSE} re-weighting, so finer partitions take more, cheaper outer steps. Iteration counts alone would therefore mislead, which is the point of panel~(b), where the leading-order \ac{FLOP} counts are drawn against the number of reconfigurable entries. Modal and beamspace designs share one law, since they share the solver and differ only in the fixed basis, and are drawn as a single gray curve with alternating family markers. The group-connected family follows $O(MN_g^{2})$ from the diagonal to the fully-connected endpoint, and all laws meet at $M^{2}$ entries. What separates the families is the operating point, marked filled. The modal design attains the fully-connected optimum at $(2\rho)^{2}=196$ entries. The beamspace first reaches $99\%$ of it at $L=110$, i.e., $12\,100$ entries, and the group-connected family at $N_g=32$, i.e., $18\,432$ entries. 
Each iteration of the \ac{DFT} at $L=110$ would cost two orders of magnitude more than the modal basis at $L=14$. The group-connected family has a comparable overall complexity to the DFT with lower performance. Therefore, the modal architecture wins on both complexity and performance levels.

\subsection{Sum Rate versus Reconfigurable Entries}
\label{sec:num_entries}
Fig.~\ref{fig:sumratevsentries} reports the central comparison: achieved sum rate as a function of the number of reconfigurable entries, for both panel sizes under identical surroundings. Three behaviors organize the figure. First, the modal curve saturates almost immediately. It is within $0.03$~bit of the fully-connected optimum already at $L=6$ to $8$, i.e., $36$ to $64$ entries, and attains it exactly at the marked $L=2\rho$ point (Prop.~\ref{prop:halmos}). The near-saturation below $2\rho$ is itself informative. $L=2\rho$ is what exactness requires, while the rate degrades gracefully under it, because the dimensions dropped first are the weakly coupled modes that carry the least rate at this \ac{SNR}. Second, the beamspace pays what the modal design avoids. At the modal operating budget it lies several bits short, e.g., $17.9$ versus $25.1$~bit at $196$ entries for $M=576$, and it first reaches $99\%$ of the anchor at $12\,100$ entries. This is a roughly $60\times$ entry-count penalty for optimizing in mismatched coordinates, whose physical origin Sec.~\ref{sec:num_dbs} isolates. Third, the element-domain ladder climbs from the diagonal surface ($18.8$ and $19.8$~bit for the two panels) through the group-connected family toward the fully-connected endpoint, confirming the nesting of Sec.~\ref{sec:complexity}, but reaches the anchor only at $\Theta(M^{2})$ entries. As $M$ grows from $256$ to $576$, the fully-connected entry count quintuples, from $65\,536$ to $331\,776$, while the modal operating point moves only from $144$ to $196$ entries. Even this small increase is geometric rather than dimensional. The deployment is shared in absolute terms, so it lies deeper inside the larger panel's near field, and the measured $r_G$ rises from $3$ to $4$. The budget tracks the relative depth of the surroundings, not the element count, which is the sense in which the cost of the proposed design is $M$-independent.

\subsection{Near-Field Range Resolvability}
\label{sec:num_resolve}
Fig.~\ref{fig:resolve} isolates the capability that distinguishes the near field from the far field: separating two users at the same azimuth by range alone. Two users are placed at a common broadside azimuth. The first is anchored at $r_0\in\{45\lambda,90\lambda\}$, the second at $r_0+\Delta r$, and $\Delta r$ is swept. All designs operate at $L=2\rho$. 

Panel~(a) shows the correlation of the two user atoms collapsing over the axial depth-of-focus scale $2r_0^{2}/d_{\mathrm F}$ (dotted). The marker is a characteristic scale rather than an exact landmark: the correlation halves already at half the marked value and nulls in its vicinity. The residual offsets reflect the axial response of a finite aperture with amplitude taper which only approximates the ideal Dirichlet kernel. What the law predicts exactly is the scaling with range. The two anchors' knees sit a factor $\approx4$ apart, making the quadratic range dependence of near-field focusing directly visible. 

Panels~(b) and~(c) show the effect on the sum rate which rises from a single-user floor, since at $\Delta r\to0$ the two channel vectors are colinear and the spatial multiplexing gain collapses to one, up to the two-user level, with the rate knee at roughly half the depth of focus. The modal design transitions first, while the diagonal surface follows with a moderate lag. The \ac{DFT} beamspace requires substantially larger separations deep in the near field ($r_0=45\lambda$) while approaching the modal design toward the far-field edge ($r_0=90\lambda$). Past the peak, the curves decline because the second user recedes, thus increasing pathloss. The ripples at large $\Delta r$ in panel~(b) superpose this differential path loss with the axial sidelobes of the focusing kernel, visible as the wiggles in panel~(a). They appear only for the near anchor because, within the plotted window, only its kernel is traversed past the main lobe. 

\begin{figure}[t]
\centering
\includegraphics[width=0.8\columnwidth]{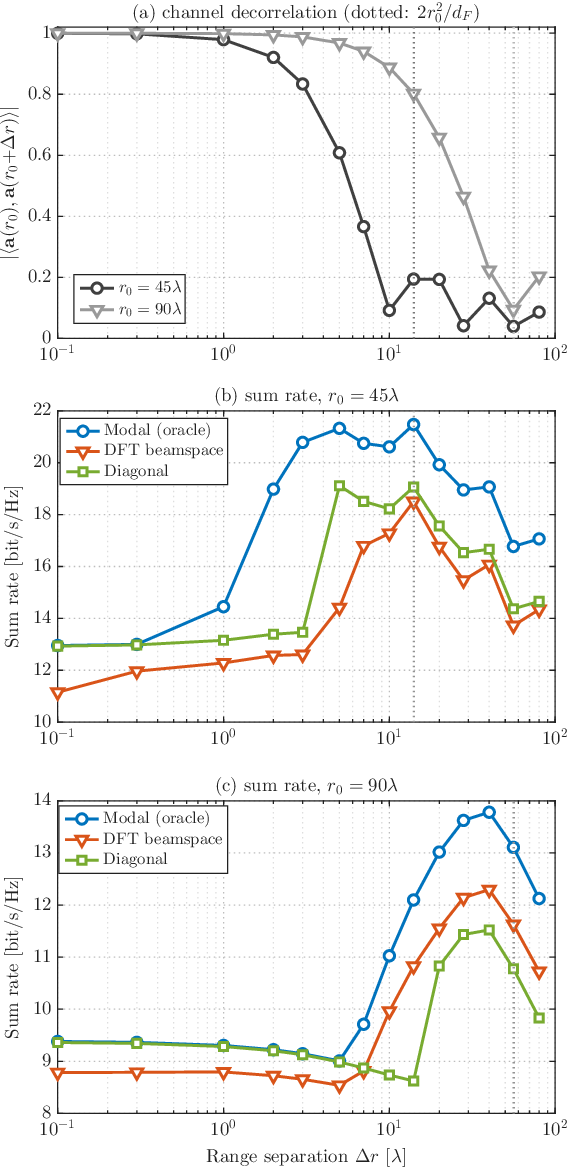}
\caption{Range resolvability of two same-azimuth users ($K=2$, $0$~dB, all restricted designs at $L=2\rho$ with $\rho$ measured per point). (a)~Atom correlation versus range separation for the two anchors. The dotted lines mark the axial depth-of-focus scale $2r_0^{2}/d_{\mathrm F}$. (b), (c)~Sum rate versus separation.} 
\label{fig:resolve} \vspace{-4mm}
\end{figure}

\begin{figure}[t] 
\centering
\includegraphics[width=0.8\columnwidth]{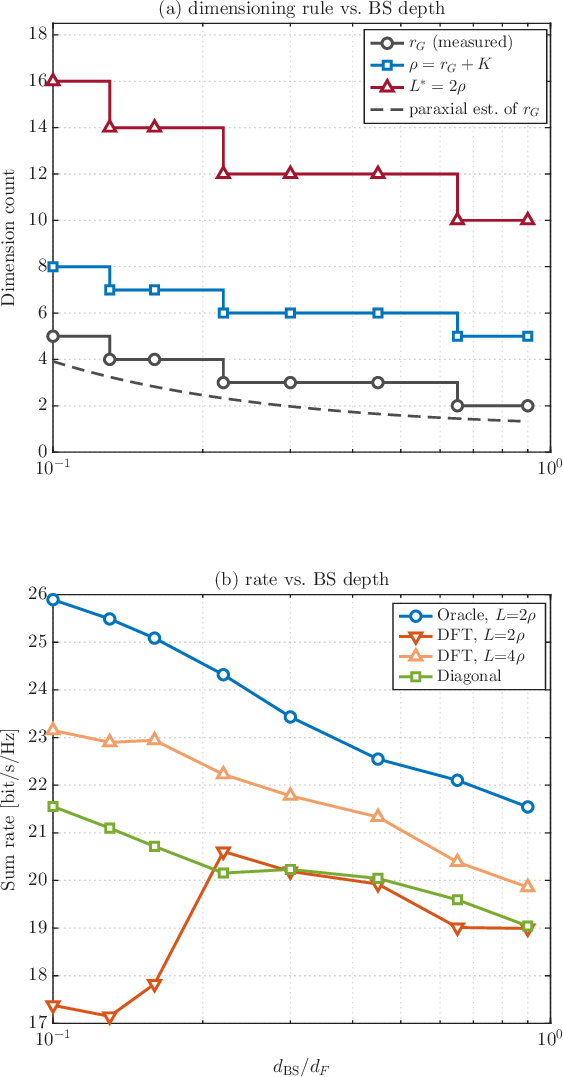}
\caption{Dependence on \ac{BS} depth with the user deployment pinned at $0$~dB. (a)~Measured $r_G$, $\rho=r_G+K$, and the exactness budget $L^{*}=2\rho$ versus $d_{\mathrm{BS}}/d_{\mathrm F}$, with the paraxial estimate of $r_G$ (dashed). (b)~Sum rate versus $d_{\mathrm{BS}}/d_{\mathrm F}$ for the modal design at $L=2\rho$, the \ac{DFT} beamspace at $L=2\rho$ and $L=4\rho$, and the diagonal surface.}
\label{fig:dbs} \vspace{-4mm}
\end{figure}

\subsection{Dependence on Near-Field Depth}
\label{sec:num_dbs}
Fig.~\ref{fig:dbs} closes the study by sweeping the one quantity that moves the dimensioning rule itself, the \ac{BS} depth $d_{\mathrm{BS}}/d_{\mathrm F}\in[0.10,0.90]$, with the users pinned to the identical deployment throughout and the sweep floored above the Fresnel limit. Panel~(a) shows the rule tracking the geometry. The measured $r_G$ steps from $5$ to $2$ as the \ac{BS} recedes, $\rho=r_G+K$ from $8$ to $5$, and the exactness budget $L^{*}=2\rho$ from $16$ to $10$. The dashed paraxial estimate $r_G\approx D_{\mathrm{BS}}D/(\lambda d_{\mathrm{BS}})+1$ tracks the staircase and degrades toward the deep end, and it is the measured rank that the construction of Sec.~\ref{sec:construction} consumes. Panel~(b) reports the consequences at bracketing budgets per basis. The modal design at $L=2\rho$ tracks the fully-connected optimum at every depth, acting as a performance upper-bound. Its halved-budget companion at $L=\rho$ coincides with it across the entire sweep (thus not drawn), demonstrating that the geometric basis can afford half its nominal budget. The \ac{DFT} beamspace cannot afford double. At $L=2\rho$ its curve is non-monotone, rising steeply to a peak near $0.22\,d_{\mathrm F}$ and then declining, and at $L=4\rho$ the same shape survives at reduced amplitude, offset toward the optimum but never reaching it.

The mechanism is a span deficiency with a closed-form signature. A spherical atom is a chirp across the aperture, and its expansion on the beamspace occupies approximately $(d_{\mathrm F}/2r)^{2}$ beams. The beamspace curve therefore decomposes into two regimes. Deep, the budget cannot even carry the incident field. The fraction of \ac{BS}-leg energy expressible by the selected beams is $58\%$ at the deepest point for $L=2\rho$, and its release as the \ac{BS} recedes, from $58\%$ to $98\%$, produces the rising edge. Past the peak the curve simply rides the declining optimum. Doubling the budget to $4\rho$ raises the deep incident-field coverage to $84\%$, which is why its knee nearly vanishes. What no \ac{BS} repositioning cures is the user side. The users' beam spreads are fixed by their own unchanging geometry, capping the beamspace at a persistent offset from the optimum at every depth, consistent with the $L\approx110$ needed in Fig.~\ref{fig:sumratevsentries}. The diagonal surface completes the picture from the architecture side. It possesses the full span but no inter-element mixing, and its gap to the optimum widens with depth, from $2.5$~bit at $0.90\,d_{\mathrm F}$ to $4.4$~bit at $0.10\,d_{\mathrm F}$. A near-rank-one incident field can be steered by per-element phasing alone, while a rank-$5$ one must be mixed, so the value of beyond-diagonal coupling grows with near-field depth.

 \section{Conclusion}
\label{sec:conclusion}
We presented a geometric modal framework for extremely large BD-RIS in the radiative near field. By exploiting the low-dimensional span of position-dependent spherical-wave atoms, a thin-SVD basis with orthonormal slack yields an $L\times L$ unitary representation with $L=2\rho\leq2(N{+}K)$ that is provably equivalent to an unconstrained $M\times M$ unitary surface, independently of panel size and using localization information only. The proposed WMMSE-Riemannian alternating optimization has per-iteration complexity independent of $M$ and provides a unified comparison with element-domain designs. Simulations confirmed exact equivalence across all operating points, achieving the fully-connected optimum with about $200$ reconfigurable entries, approximately $60\times$ fewer than far-field beamspace. The gain of beyond-diagonal coupling over diagonal phasing increases from $2.5$ to $4.4$ bit with near-field depth, while the modal dimension remains small. Although $L=2\rho$ guarantees exactness, performance degrades gracefully for smaller $L$. Future work will address position uncertainty, multipath, electromagnetic boundary effects, and wideband and multi-surface extensions.

\bibliographystyle{IEEEtran}
\bibliography{bibliography}

\end{document}

%% file: acronyms.tex
\acrodef{AO}{Alternating Optimization}
\acrodef{AWGN}{Additive White Gaussian Noise}
\acrodef{BD-RIS}{Beyond-Diagonal Reconfigurable Intelligent Surface}
\acrodef{BS}{Base Station}
\acrodef{DFT}{Discrete Fourier Transform}
\acrodef{FLOP}{Floating-Point Operation}
\acrodefplural{FLOP}[FLOPs]{Floating-Point Operations}
\acrodef{LoS}{Line of Sight}
\acrodef{MMSE}{Minimum Mean Squared Error}
\acrodef{MSE}{Mean Squared Error}
\acrodef{RIS}{Reconfigurable Intelligent Surface}
\acrodefplural{RIS}[RISs]{Reconfigurable Intelligent Surfaces}
\acrodef{SINR}{Signal-to-Interference-plus-Noise Ratio}
\acrodef{SNR}{Signal-to-Noise Ratio}
\acrodef{SVD}{Singular Value Decomposition}
\acrodef{WMMSE}{Weighted Minimum Mean Squared Error}
\acrodef{XL}{Extremely Large}